\newtheorem{definition}{Definition}
\DeclareMathOperator{\tr}{Tr}
\DeclareMathOperator{\interior}{int}
\DeclareMathOperator*{\maximize}{maximize}
\DeclareMathOperator*{\minimize}{minimize}
\newcommand{\bvec}{\mathbf}
\newcommand{\id}{\mathds{1}}
\newcommand{\rchsh}{\mathrm{chsh}}
\let\lsbkt[ \let\rsbkt]
\let\lparn( \let\rparn)
\DeclarePairedDelimiter{\cointerval}{\lsbkt}{\lsbkt}
\DeclarePairedDelimiter{\ccinterval}{\lsbkt}{\rsbkt}
\DeclarePairedDelimiter{\abs}{\lvert}{\rvert}
\DeclarePairedDelimiter{\mean}{\langle}{\rangle}
\DeclarePairedDelimiter{\iprod}{\langle}{\rangle}
\newcommand{\given}[1][]{\nonscript\:#1\vert\nonscript\:}
\newtheorem{theorem}{Theorem}
\newtheorem{lemma}{Lemma}
\newcommand{\bff}{\bvec{f}}
\newcommand{\freg}{\ccinterval{\hat{\bff}^-, \hat{\bff}^+}}
\newcommand{\tresh}{H_{\mathrm{thr}}}
\begin{document}
\title{Device-independent randomness generation\\ from several Bell estimators}
\author{Olmo Nieto-Silleras, C\'edric Bamps, Jonathan Silman, Stefano Pironio}
\affil{Laboratoire d'Information Quantique, CP224\\ Universit\'e libre de Bruxelles, 1050 Brussels (Belgium)}
\date{(Dated: March 20, 2018)}

\maketitle

\begin{abstract}
Device-independent randomness generation and quantum key distribution protocols rely on a fundamental relation between the non-locality of quantum theory and its random character.
This relation is usually expressed in terms of a trade-off between the probability of guessing correctly the outcomes of measurements performed on quantum systems and the amount of violation of a given Bell inequality.
However, a more accurate assessment of the randomness produced in Bell experiments can be obtained if the value of several Bell expressions is simultaneously taken into account, or if the full set of probabilities characterizing the behavior of the device is considered.
We introduce protocols for device-independent randomness generation, secure against classical side information, that rely on the estimation of an arbitrary number of Bell expressions or even directly on the experimental frequencies of measurement outcomes.
Asymptotically, this results in an optimal generation of randomness from experimental data (as measured by the min-entropy), without having to assume beforehand that the devices violate a specific Bell inequality.
\end{abstract}

\section{Introduction}
In recent years, researchers have uncovered a fundamental relationship between the non-locality of quantum theory and its random character.
This relationship is usually formulated as follows.
Consider two (or generally $k$) separated quantum devices accepting, respectively, classical inputs $\mathrm{x}_1$ and $\mathrm{x}_2$ and outputting classical outputs $\mathrm{a}_1$ and $\mathrm{a}_2$.
Let $p=\{p(\mathrm{a}_1\mathrm{a}_2 \given \mathrm{x}_1\mathrm{x}_2)\}$ denote the set of joint probabilities describing how the devices respond to given inputs, from the point of view of a user who can only interact with the devices through the input-output interface, but who has no knowledge of the inner workings of the devices.
Suppose that given $p$, the expectation value of a certain Bell expression $f$, such as the Clauser-Horne-Shimony-Holt (CHSH) expression \cite{CHSH69}, equals $f[p]$.
Then, it is in principle possible to compute a lower bound on the randomness generated by the devices, as quantified by the min-entropy---the negative logarithm of the maximal probability of correctly guessing the values of future outputs.
This bound on the min-entropy holds for any observer, including those having an arbitrarily precise description of the inner workings of the devices, and depends only on information derived from the resulting input-output behavior through the quantity $f[p]$.
In principle, this bound can be computed numerically for any given Bell expression $f$.
For certain Bell expressions, such as the CHSH expression, it can also be determined analytically.

This relation between the non-locality of quantum theory and its randomness is at the basis of various protocols for device-independent (DI) randomness generation (RNG) \cite{Col09,*CK11,PAM+10} and quantum key distribution (QKD) \cite{MY98,ABG+07}.
The theoretical analysis of such protocols presents us with an extra challenge in that the probabilistic behavior $p$ of the devices is not known in advance and may vary from one measurement run to the next.
This implies that bounds on the randomness as a function of $f[p]$ have to be adapted to rely instead on the value of the Bell expression $f$ estimated from experimental data.
Some DIRNG and DIQKD protocols, and their security analyses, are reliant on specific Bell inequalities (usually the CHSH inequality) \cite{VV12,RUV13,CY13,ARV16} or certain families of Bell inequalities \cite{Mas09,MS14,MRC+14}, while others may be adapted to arbitrary Bell inequalities \cite{PAM+10,MPA11,PM13,FGS13,PMLA13,MS14a}.
However, to our knowledge all DIRNG and DIQKD protocols in the literature require that a single Bell inequality be chosen in advance and its experimental violation estimated (one exception is \cite{RBH+16}, where two fixed Bell expressions are used).
The length and secrecy of the final key will then depend on the observed violation of the chosen inequality.

Nevertheless, it has been pointed out in \cite{NPS14,BSS14} that the fundamental relation between the randomness and non-locality of quantum theory does not necessarily need to be expressed in terms of a specific Bell inequality.
It is in principle possible, at least numerically, to bound the probability of guessing correctly the outputs of a pair of quantum devices directly from the knowledge of the joint input-output probabilities $p$.
Indeed, the amount of violation $f[p]$ of a given Bell inequality captures the non-local behavior of the devices only partially, and better bounds on the min-entropy can be obtained if all the information about the devices' behavior is taken into account.

This observation raises the following question: can one devise a device-independent RNG or QKD protocol that does not rely on the estimation of any a priori chosen Bell inequality, but which instead takes directly into account all the data generated by the devices?

There are various reasons for introducing protocols of this type.
First, as already mentioned, the entire set of data generated by the devices can provide more information than the violation of a specific Bell inequality, and may therefore potentially allow for more efficient protocols.
Second, the choice of a Bell inequality may have a deep influence on the amount of randomness that can be certified: as shown in \cite{AMP12} there are devices for which the amount of randomness, as computed from the CHSH inequality, is arbitrarily small, but is maximal if computed using another Bell inequality.
Third, even if a set of quantum devices have been specifically designed to maximize the randomness according to a specific Bell inequality, the optimal extraction of randomness from noisy versions of such devices, say because of degradation of the devices with time, will typically rely on other Bell inequalities \cite{NPS14,BSS14,MP13}.
Finally, suppose that one is given a set of quantum devices without any specification of which Bell inequality they are expected to violate.
Can one nevertheless directly use them in a protocol and obtain a non-zero random string or shared key, without testing their behavior beforehand?

We show here that it is indeed possible to devise DIRNG protocols which exploit more information than the estimated violation of a single Bell inequality, particularly, DIRNG protocols which exploit the full set of frequencies obtained (i.e., the entire set of estimates of the behavior $p$).
Specifically, we introduce a DIRNG protocol whose security holds against an adversary limited to classical side information, or equivalently, with no long-term quantum memory.
(Note that such a level of security may well be sufficient for all practical purposes \cite{PM13,FGS13}.)
Technically, our protocol is obtained by generalizing the security analysis introduced in \cite{PM13,FGS13} and combining it with the semidefinite programming techniques introduced in \cite{NPS14,BSS14} for lower-bounding the randomness based on the full set of probabilities $p$ (which cannot be directly applied to experimental data).

We start in Section~\ref{sec:bell} by briefly presenting  the theoretical framework of our work, its main assumptions, and the notation used throughout the paper.
In Sections~\ref{sec:rand} to \ref{sec:sdp} we present our main mathematical results.
In Section~\ref{sec:rand} we present the main theorem of the paper and explain in detail how to put a DI bound on the randomness produced when measuring a Bell device $n$ times in succession, given that we have a way to bound the single-round randomness as a function of the Bell expectation, and given that we can estimate the Bell expectation with some confidence.
These two sub-procedures are respectively presented in Sections~\ref{sec:estim} and \ref{sec:sdp} for the general case of an arbitrary number of Bell expressions.
Combining these two sub-procedures with the general approach of Section~\ref{sec:rand} immediately yields a DIRNG protocol, whose various steps are summarized in Section~\ref{sec:prot}.
In Section~\ref{sec:discussion} we discuss in detail the main features of our protocol, and illustrate these with a numerical example.
We end with some concluding remarks and open questions in Section~\ref{sec:conclusion}.

\section{Behaviors and Bell expressions}\label{sec:bell}

In the following we will refer to a Bell set up, that is to say, $k$ separated ``black'' boxes (quantum devices whose inner workings are unknown), as a Bell device.
Each box $i$ can receive an input $\mathrm{x}_i$ upon which it produces an output $\mathrm{a}_i$, with $\mathrm{x}_i$ and $\mathrm{a}_i$ taking values in some finite sets $\mathcal{X}_i$ and $\mathcal{A}_i$, respectively, where without loss of generality we assume that the set of outputs $\mathcal{A}_i$ does not depend on the input $\mathrm{x}_i$.
We write $x=(\mathrm{x}_1,\dotsc,\mathrm{x}_k)$ and $a=(\mathrm{a}_1,\dotsc,\mathrm{a}_k)$ for the $k$-tuple of inputs and outputs, and write $\mathcal{X}=\mathcal{X}_1\times\dotsm\times\mathcal{X}_k$ and $\mathcal{A}=\mathcal{A}_1\times\dotsm\times\mathcal{A}_k$ for the set of all possible $k$-tuples of inputs and outputs.
Note that we use a roman (upright) type for the inputs and outputs of a single box and an italic type for the joint inputs and outputs of all $k$ boxes.

The behavior of a single-round use of this Bell device can be characterized by the $\abs{\mathcal A} \times \abs{\mathcal X}$ joint probabilities $p(a \given x)$, which we can arrange into a vector $p \in \mathbb{R}^{\abs{\mathcal A} \times \abs{\mathcal X}}$.
We denote by $\mathcal Q \subset \mathbb{R}^{\abs{\mathcal A} \times \abs{\mathcal X}}$ the set of behaviors $p$ which admit a quantum representation, i.e., the set of behaviors such that there exist a $k$-partite quantum state and local measurements yielding the outcomes $a$ with probability $p(a \given x)$ when performing the measurements $x$.
It is well-known that the set $\mathcal{Q}$ can be approximated from its exterior (from outside the set) by a series of semidefinite programs (SDP) using the NPA hierarchy \cite{NPA08}.

We define a \emph{Bell expression} as a vector $f\in\mathbb{R}^{\abs{\mathcal A}\times\abs{\mathcal X}}$ with components $f(a,x)$.
The Bell expression $f$ defines a linear form on the set of behaviors $p$ through
\begin{equation}
\label{eq:bell}
f[p]= \sum_{a,x} f(a,x)p(a \given x)\,.
\end{equation}
We refer to $f[p]$ as the expectation of $f$ with respect to the behavior $p$.

We consider here a framework in which the information we have about a Bell device is not necessarily given by the full behavior $p$, but possibly only by the expectation of one or more Bell expressions.
In the following, we thus assume that $t$ Bell expressions $f_\alpha$ ($\alpha=1,\dotsc,t$) have been selected.
(The certifiable randomness will depend on this initial choice of Bell expressions; we discuss this issue later.)
We denote by $\bvec{f}=(f_1,\dotsc,f_t)$ these $t$ Bell expressions and by $\bvec{f}[p]=(f_1[p],\dotsc,f_t[p])$ their expectations with respect to the behavior $p$.
As an example, in a bipartite scenario, we might only know the value of the CHSH expression, in which case $t=1$ and there is a single $f$ defined by $f(a,x)=(-1)^{\mathrm{a}_1+\mathrm{a}_2+\mathrm{x}_1\mathrm{x}_2}$.
But the framework is also applicable when $\bvec{f}[p]$ corresponds to the full set $p$ of probabilities.
One simply needs to consider $\abs{\mathcal A}\times \abs{\mathcal X}$ expressions, one for each pairing $(a,x)$, which are defined by $f_{a,x}(a',x')=\delta_{(a,x),(a',x')}$, so that $f_{a,x}[p]=p(a \given x)$.

Of course, in a DI protocol, we are not actually given $\bvec{f}[p]$; we must instead estimate it by performing sequential measurements.
We are thus led to consider a Bell device which is used $n$ times in succession.
We write $\vec{x}=(x_1,\dotsc,x_n)$ and $\vec{a}=(a_1,\dotsc,a_n)$ for the corresponding sequence of inputs and outputs and $\vec{x}_j=(x_1,\dotsc,x_j)$ and $\vec{a}_j=(a_1,\dotsc,a_j)$ for the sequences of inputs and outputs up to, and including, round $j$.

We write $P(\vec{a} \given \vec{x})$ for the conditional probabilities of obtaining the sequence of outputs $\vec{a}$ given a certain sequence of inputs $\vec{x}$.
Note that we use an upper-case $P$ to denote the $n$-round behavior of the boxes and lower-case $p$'s for single-round behaviors.
We assume that the Bell device is probed using inputs $\vec{x}$ distributed according to a probability distribution $\Pi(\vec{x})$.
We will consider, in particular, the case where at each round the inputs are selected  according to identical and independent distributions $\pi(x)$, so that $\Pi(\vec{x})=\prod_{j=1}^n \pi(x_j)$ (though this condition can actually be slightly relaxed in the results that follow).
The full (non-conditional) $n$-round probabilities are thus given by $P(\vec{a},\vec{x})=P(\vec{a} \given \vec{x})\,\Pi(\vec{x})$.
We denote by $P_{AX}$ and $P_{A \given X}$ the distributions corresponding to the probabilities $P(\vec{a},\vec{x})$ and $P(\vec{a} \given \vec{x})$, respectively.

The only assumption we make about the Bell device is that at each round it is characterized by a joint entangled quantum state and a respective set of local measurement operators for each box.
Each set of local measurement operators can depend on the past inputs and outputs of all $k$ boxes (separated boxes can thus freely communicate between measurement rounds), but does not depend on future inputs (inputs are thus selected independently of the state of the device) or inputs of the $k-1$ other boxes in the same round.
Mathematically, this means that we can write $P(\vec{a} \given \vec{x})=\prod_{j=1}^n P(a_j \given x_j,\vec{a}_{j-1},\vec{x}_{j-1})$, and that the (single-round) behavior at round $j$ given the past inputs and outputs $\vec{x}_{j-1}$ and $\vec{a}_{j-1}$, defined as $p_{\vec{a}_{j-1},\vec{x}_{j-1}}(a_j \given x_j)=P(a_j \given x_j,\vec{a}_{j-1},\vec{x}_{j-1})$, should be a valid no-signaling quantum behavior, i.e., $p_{\vec{a}_{j-1},\vec{x}_{j-1}}\in\mathcal{Q}$.

We assume that the internal behavior of the boxes may be classically correlated with a system held by an adversary.
Formally,  these correlations and the adversary's knowledge can be represented through the joint probabilities $P(\vec{a},\vec{x},e)$, where $e$ denotes the adversary's classical side information.
However, in order to keep the notation simple, we do not explicitly include $e$ in the following.
All the reasonings that follow would nevertheless hold, with only minor modifications, if the adversary's classical side information $e$ were explicitly taken into account.
This can be understood by comparing our proofs with those in \cite{PM13}.
Alternatively, $e$ can be formally viewed as an initial input $x_0=e$.

In the following, we sometimes adopt a terminology where the $k$-tuples $x$ and $a$ are referred to as the input and output of (a single-round use of) the Bell device (though of course each consists of the inputs and outputs, respectively, of all $k$ boxes).

\section{A general procedure for DIRNG against classical side-in\-for\-ma\-tion}\label{sec:rand}
In this section, we show how to quantify the randomness produced by $n$ sequential uses of the Bell device based on the Bell expressions $\bvec{f}$.
We follow the approach introduced in \cite{PAM+10,PM13}.
This approach relies on two essential sub-procedures: a first sub-procedure to bound the randomness of single-round behaviors and  a second sub-procedure to estimate a certain quantity involving the Bell expressions $\bvec{f}$.
Given these two ingredients, the single-round randomness bound can, through some simple algebra, be adapted to the $n$-round scenario and related to the actual data obtained in the Bell experiment.

We provide a macro-level description of this approach, which relies only on certain general mathematical properties that these two basic sub-procedures must satisfy, but not on any specifics as to how to implement them.
We will present explicit ways to carry out these sub-procedures in the next two sections.

Intuitively, the output of the Bell device exhibits randomness for some choice of input $\vec x$ if there is no corresponding outcome that is certain to happen, i.e., if $P(\vec{a} \given \vec{x})<1$ for all $\vec{a}\in\mathcal{A}^n$.
Equivalently, we can express this condition by saying that the \emph{surprisals} $-\log_2 P(\vec{a} \given \vec{x})$ are bounded away from zero: $-\log_2 P(\vec{a} \given \vec{x})>0$ for all $\vec{a}\in\mathcal{A}^n$.
Our first aim will thus be to lower-bound these surprisals without making any assumptions regarding the Bell device's behavior apart from the ones stated in Section~\ref{sec:bell}.
We will then see how to turn this bound into a more formal statement in terms of min-entropy.

To bound the $n$-round randomness, we assume the existence of a function $H$ which bounds the \emph{single-round} surprisal $-\log_2 p(a \given x)$ as a function of the Bell expectations $\bvec{f}[p]$.
This is our first ingredient.
We actually require this function to non-trivially bound the surprisals $-\log_2 p(a \given x)$ corresponding to a certain subset $\mathcal{X}_r\subseteq \mathcal{X}$ of all possible inputs.
This is because for certain behaviors, some inputs $x$ lead to less predictable outputs than those resulting from other inputs, and we would therefore prefer to focus on these inputs only.
(As will be elaborated on later, the amount of certifiable randomness generally depends on the choice of $\mathcal{X}_r$.)
Formally, the function $H$, on which our results are based, is defined as follows.
\begin{definition}
\label{def:guess}
Let \(\bvec{f}[\mathcal{Q}] = \{ \bvec{f}[p] : p \in \mathcal{Q} \}\) be the set of Bell expectation vectors compatible with at least one quantum behavior.
A function \(H:\bvec{f}[\mathcal{Q}] \to \ccinterval*{0,\log_2 {\abs{\mathcal A}}}\) is a randomness-bounding (RB) function if it satisfies the following properties:
\begin{enumerate}
\item \(\min_{a\in\mathcal{A},x\in \mathcal{X}_r} \bigl( -\log_2 p(a \given x) \bigr) \geq H(\bvec{f}[p])\) for all \(p\in \mathcal{Q}\).
\item \(H(\bvec{f}[p])\) is a convex function of its argument:
\begin{equation}
 H \bigl(q\, \bvec{f}[p_1]+(1-q)\, \bvec{f}[p_2] \bigr)  \leq q H(\bvec{f}[p_1]) + (1-q) H(\bvec{f}[p_2])
\end{equation}
for any \(0\leq q\leq 1\) and any \(p_1,p_2\in\mathcal{Q}\).
\end{enumerate}\end{definition}

We will also need to compute a lower bound on $H(\bvec{f}[p])$ for all behaviors $p \in \mathcal Q$ such that $\bvec f[p] \in \mathcal V$ for some arbitrary region $\mathcal V \subseteq \mathbb R^t$.
We thus extend our definition of $H$ to sets, such that
\begin{equation}
H(\mathcal{V}) \leq \inf\{H(\bvec{f}[p])\,:\, \bvec{f}[p]\in\bvec{f}[\mathcal{Q}]\cap \mathcal{V}\}\,.
\end{equation}
When $\bvec f[\mathcal Q] \cap \mathcal V = \emptyset$, we define $H(\mathcal V) = 0$.
Furthermore, we define $\eta$ to be a constant such that $\eta \ge \eta^* = \max_{p\in\mathcal{Q}} H(\bvec{f}[p])$.
(In case $\eta^*$ is hard to compute, we may always use $\eta = \log_2\abs{\mathcal A}$.)
We discuss the intuitive interpretation of $H$ and its properties in Section~\ref{sec:sdp}.

Given a RB function $H$, we can now easily lower-bound the $n$-round surprisals:
\begin{lemma}\label{thm:Pax_bound}
Let \(H\) be a RB function.
Then, for any \((\vec{a},\vec{x})\) and any Bell device behavior \(P_{A \given X}\)
\begin{equation}\label{eq:pax_bound_r}
-\log_2 P(\vec{a} \given \vec{x})\geq nH\mleft(\frac{1}{n}\sum_{j=1}^n \bvec{f}\mleft[p_{\vec{a}_{j-1},\vec{x}_{j-1}}\mright]\mright) -\nu(\vec{x})\eta \, ,
\end{equation}
where
\begin{equation}\label{eq:gamma}
\nu(\vec{x})=\sum_{j=1}^n \id_{\mathcal{X}\setminus\mathcal{X}_r}(x_j)
\end{equation}
is the number of \(x_j\) in \(\vec{x}=(x_1,\dotsc,x_n)\) which do not belong to the set \(\mathcal{X}_r\).
\end{lemma}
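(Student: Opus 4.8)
The plan is a straightforward chaining argument in three steps: factorize the $n$-round conditional probability into single-round factors, bound each factor with the RB function $H$, and aggregate using the convexity of $H$. First I would invoke the causal structure of the Bell device from Section~\ref{sec:bell}: by the chain rule together with the assumption that the outcome of round $j$ does not depend on future inputs,
\[
 P(\vec{a} \given \vec{x}) = \prod_{j=1}^n P(a_j \given x_j,\vec{a}_{j-1},\vec{x}_{j-1}) = \prod_{j=1}^n p_{\vec{a}_{j-1},\vec{x}_{j-1}}(a_j \given x_j).
\]
If one of these factors vanishes then the left-hand side of \eqref{eq:pax_bound_r} is $+\infty$ and there is nothing to prove, so we may assume all factors are positive; taking $-\log_2$ then turns the product into the sum of single-round surprisals $-\log_2 P(\vec{a} \given \vec{x}) = \sum_{j=1}^n \bigl(-\log_2 p_{\vec{a}_{j-1},\vec{x}_{j-1}}(a_j \given x_j)\bigr)$.

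Second, I would bound each summand. Writing $p^{(j)} := p_{\vec{a}_{j-1},\vec{x}_{j-1}} \in \mathcal{Q}$: if $x_j \in \mathcal{X}_r$, then the first property of Definition~\ref{def:guess} gives $-\log_2 p^{(j)}(a_j \given x_j) \ge \min_{a \in \mathcal{A},\, x \in \mathcal{X}_r}\bigl(-\log_2 p^{(j)}(a \given x)\bigr) \ge H(\bvec{f}[p^{(j)}])$; if instead $x_j \notin \mathcal{X}_r$, then, since any probability is at most $1$ and $H(\bvec{f}[p^{(j)}]) \le \eta^* \le \eta$, we have $-\log_2 p^{(j)}(a_j \given x_j) \ge 0 \ge H(\bvec{f}[p^{(j)}]) - \eta$. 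In both cases $-\log_2 p^{(j)}(a_j \given x_j) \ge H(\bvec{f}[p^{(j)}]) - \id_{\mathcal{X}\setminus\mathcal{X}_r}(x_j)\,\eta$, so summing over $j$ and recalling \eqref{eq:gamma} yields $-\log_2 P(\vec{a} \given \vec{x}) \ge \sum_{j=1}^n H(\bvec{f}[p^{(j)}]) - \nu(\vec{x})\,\eta$.

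Third, I would apply Jensen's inequality. Since $\mathcal{Q}$ is convex, $\bar p := \tfrac1n\sum_{j=1}^n p^{(j)} \in \mathcal{Q}$, and since $\bvec{f}$ is a linear form, $\tfrac1n\sum_{j=1}^n \bvec{f}[p^{(j)}] = \bvec{f}[\bar p] \in \bvec{f}[\mathcal{Q}]$, so $H$ is defined at this point; the second property of Definition~\ref{def:guess}, iterated to arbitrary finite convex combinations, then gives $\tfrac1n\sum_{j=1}^n H(\bvec{f}[p^{(j)}]) \ge H\bigl(\tfrac1n\sum_{j=1}^n \bvec{f}[p^{(j)}]\bigr)$. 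Multiplying through by $n$ and combining with the bound from the second step gives exactly \eqref{eq:pax_bound_r}.

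There is no genuine obstacle here; the two points that call for a moment's care are the rounds with $x_j \notin \mathcal{X}_r$, where one simply trades the trivial bound $-\log_2 p^{(j)}(a_j\given x_j)\ge 0$ against the nonpositive quantity $H(\bvec{f}[p^{(j)}])-\eta$ (which is precisely why $\eta \ge \eta^*$ is required), and the verification that the averaged Bell vector $\tfrac1n\sum_j \bvec{f}[p^{(j)}]$ stays in the domain $\bvec{f}[\mathcal{Q}]$ on which $H$ is defined, which follows from convexity of $\mathcal{Q}$ and linearity of $\bvec{f}$. The same argument goes through verbatim with the adversary's classical side information $e$ carried along, as remarked in Section~\ref{sec:bell}.
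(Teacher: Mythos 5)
Your proposal is correct and follows essentially the same route as the paper's proof: factorize the $n$-round probability into single-round behaviors, bound each surprisal by $H(\bvec{f}[p_{\vec{a}_{j-1},\vec{x}_{j-1}}])$ for $x_j\in\mathcal{X}_r$ and by $H(\bvec{f}[p_{\vec{a}_{j-1},\vec{x}_{j-1}}])-\eta$ otherwise, then apply convexity of $H$. Your added remarks (handling vanishing factors, and checking via convexity of $\mathcal{Q}$ and linearity of $\bvec{f}$ that the averaged Bell vector lies in $\bvec{f}[\mathcal{Q}]$) are small refinements the paper leaves implicit.
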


\begin{proof}[Proof of Lemma \ref{thm:Pax_bound}]
The proof follows essentially the same steps as the proof of Lemma~1 in \cite{PM13}.
The main differences are
(a) that we express the bound eq.~\eqref{eq:pax_bound_r} as a function of $t$ Bell expressions, instead of a single Bell expression, and
(b) that the bound considers explicitly only the randomness from the inputs in $\mathcal{X}_r$.

From our assumptions regarding the Bell device, it follows that for any $(\vec{a},\vec{x})$ we can write
\begin{equation}\label{eq:prodpi}
-\log_2 P(\vec{a} \given \vec{x}) = -\log_2 \prod_{j=1}^n p_{\vec{a}_{j-1},\vec{x}_{j-1}}(a_j \given x_j) =\sum_{j=1}^n -\log_2 p_{\vec{a}_{j-1},\vec{x}_{j-1}}(a_j \given x_j) .
\end{equation}
Each term in the sum such that $x_j\in \mathcal{X}_r$ can be bounded by $H(\bvec{f}[p_{\vec{a}_{j-1},\vec{x}_{j-1}}])$ according to the definition of the function $H$.
If $x_j\notin \mathcal{X}_r$, it is certainly the case that $-\log_2 p_{\vec{a}_{j-1},\vec{x}_{j-1}}(a_j \given x_j) \geq 0 \geq H(\bvec{f}[p_{\vec{a}_{j-1},\vec{x}_{j-1}}]) - \eta$.
We can thus write
\begin{align}
-\log_2 P(\vec{a} \given \vec{x}) & \geq  \sum_{j=1}^n H(\bvec{f}[p_{\vec{a}_{j-1},\vec{x}_{j-1}}]) -\nu(\vec{x}) \eta \\
& \geq nH\mleft(\frac{1}{n}\sum_{j=1}^n \bvec{f}\mleft[p_{\vec{a}_{j-1},\vec{x}_{j-1}}\mright]\mright) -\nu(\vec{x}) \eta , \label{eq:pcxve_bound_r}
\end{align}
where in the last line we have exploited the convexity of $H$.
\end{proof}

Lemma \ref{thm:Pax_bound} tells us how to bound the surprisals $-\log_2 P(\vec{a} \given \vec{x})$ as a function of $\frac{1}{n}\sum_{j=1}^n \bvec{f}[p_{\vec{a}_{j-1},\vec{x}_{j-1}}]$, which can be understood as an $n$-round average Bell expectation, where the average is taken conditioned on past inputs and outputs at each preceding round.
This quantity, however, is not directly observable.
This leads us to introduce the following definition of a confidence region, which is the second ingredient needed in our approach.
\begin{definition}
\label{def:confreg}
A \(1-\epsilon\) confidence region \(\mathcal{V}(\vec{a},\vec{x},\epsilon)\) for \(\frac{1}{n}\sum_{j=1}^n \bvec{f}[p_{\vec{a}_{j-1},\vec{x}_{j-1}}]\) is a subset of \( \mathbb{R}^t \) such that, according to any distribution \(P_{AX}\),
\begin{equation}
\Pr\mleft[\frac{1}{n}\sum_{j=1}^n \bvec{f}\mleft[p_{\vec{a}_{j-1},\vec{x}_{j-1}}\mright]\in \mathcal{V}(\vec a, \vec x, \epsilon)\mright]\geq 1-\epsilon\,.
\end{equation}
We denote by \(V=\{(\vec{a},\vec{x}):\frac{1}{n}\sum_{j=1}^n \bvec{f}[p_{\vec{a}_{j-1},\vec{x}_{j-1}}]\in \mathcal{V}(\vec a, \vec x, \epsilon)\}\) the set of input-output sequences such that \(\frac{1}{n}\sum_{j=1}^n \bvec{f}[p_{\vec{a}_{j-1},\vec{x}_{j-1}}]\) belongs to the confidence region \(\mathcal{V}\).
\end{definition}
(Note that in general $\mathcal{V}$ explicitly depends on $\vec{a}$ and $\vec{x}$, although notation-wise this dependence is sometimes left implicit.)
In other words, for small $\epsilon$ and large $n$, knowing the outcomes $(\vec a, \vec x)$ of $n$ rounds of measurement, one can determine $\mathcal V = \mathcal V(\vec a, \vec x, \epsilon)$ and assert with high confidence that $\frac{1}{n}\sum_{j=1}^n \bvec{f}[p_{\vec{a}_{j-1},\vec{x}_{j-1}}]$ is somewhere in $\mathcal V$, even though its exact value cannot be deduced from $(\vec a, \vec x)$ alone.
The assertion is false if and only if $(\vec a, \vec x) \notin V$, which occurs with a probability smaller than $\epsilon$ by definition.

Combining eq.~\eqref{eq:pcxve_bound_r} with this definition immediately implies the following:
\begin{lemma}\label{thm:pxcve_bound}
Let \(\mathcal{V}\) be a \(1-\epsilon\) confidence region according to Definition \ref{def:confreg}.
Then for any \((\vec{a},\vec{x})\in V\)
\begin{equation}\label{eq:pcxve_bound_r_2}
-\log_2 P(\vec{a} \given \vec{x})\geq nH(\mathcal{V})-\nu(\vec{x})\eta \, .
\end{equation}
\end{lemma}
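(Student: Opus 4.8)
The plan is to read the result off directly from Lemma~\ref{thm:Pax_bound}, combined with the set-extension of $H$ and the definition of the set $V$; the only real work is to check that $H$ is legitimately applied to the relevant point. First I would invoke Lemma~\ref{thm:Pax_bound} for the sequence $(\vec a,\vec x)$ at hand, which already gives
\begin{equation*}
-\log_2 P(\vec{a} \given \vec{x})\geq nH\mleft(\tfrac{1}{n}\sum_{j=1}^n \bvec{f}\mleft[p_{\vec{a}_{j-1},\vec{x}_{j-1}}\mright]\mright) -\nu(\vec{x})\,\eta .
\end{equation*}
It then suffices to show that the argument of $H$ on the right-hand side lies in $\bvec f[\mathcal Q]\cap\mathcal V$, and hence that $H$ evaluated there is at least $H(\mathcal V)$.

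Next I would verify the two memberships. That $\tfrac1n\sum_{j=1}^n\bvec f[p_{\vec a_{j-1},\vec x_{j-1}}]\in\mathcal V$ is precisely the condition defining membership of $(\vec a,\vec x)$ in the set $V$ of Definition~\ref{def:confreg}, so it holds by the hypothesis $(\vec a,\vec x)\in V$. That the same point lies in $\bvec f[\mathcal Q]$ follows from convexity, exactly as in the last line of the proof of Lemma~\ref{thm:Pax_bound}: each conditional single-round behavior $p_{\vec a_{j-1},\vec x_{j-1}}$ belongs to $\mathcal Q$ by the assumptions of Section~\ref{sec:bell}, $\mathcal Q$ is convex, and $\bvec f[\cdot]$ is linear, so $\tfrac1n\sum_{j=1}^n\bvec f[p_{\vec a_{j-1},\vec x_{j-1}}]=\bvec f\mleft[\tfrac1n\sum_{j=1}^n p_{\vec a_{j-1},\vec x_{j-1}}\mright]\in\bvec f[\mathcal Q]$. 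In particular $\bvec f[\mathcal Q]\cap\mathcal V\neq\emptyset$, so the set-valued extension of $H$ is being used in its nontrivial regime.

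Finally I would combine these observations: since the conditional average Bell expectation lies in $\bvec f[\mathcal Q]\cap\mathcal V$, the defining inequality $H(\mathcal V)\le\inf\{H(\bvec f[p]):\bvec f[p]\in\bvec f[\mathcal Q]\cap\mathcal V\}$ gives $H\mleft(\tfrac1n\sum_{j=1}^n\bvec f[p_{\vec a_{j-1},\vec x_{j-1}}]\mright)\ge H(\mathcal V)$; substituting this into the displayed bound above yields eq.~\eqref{eq:pcxve_bound_r_2}. I do not expect a genuine obstacle here — the statement is essentially an immediate corollary — and the only step that deserves explicit mention is the convexity argument guaranteeing that the averaged Bell expectation is an admissible argument of $H$, i.e.\ that it lies in the domain $\bvec f[\mathcal Q]$ on which $H$ and its set-extension are defined.
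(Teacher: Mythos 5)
Your proposal is correct and follows essentially the same route as the paper, which treats this lemma as an immediate consequence of Lemma~\ref{thm:Pax_bound} together with Definition~\ref{def:confreg} and the set-extension of $H$; your extra remark that the averaged Bell expectation lies in $\bvec f[\mathcal Q]$ (by convexity of $\mathcal Q$ and linearity of $\bvec f$) is a valid and worthwhile clarification of why $H(\mathcal V)$ lower-bounds $H$ at that point.
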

Lemma~\ref{thm:pxcve_bound} tells us that the surprisal associated to the event $\vec{a}$ given $\vec{x}$ is lower-bounded by a function of $(\vec{a},\vec{x})$, except for a subset of ``bad'' events $\{(\vec{a},\vec{x})\notin V\}$.

One way to deal with these bad events is simply to pretend that the boxes are characterized by a slightly modified behavior $\tilde P$ that yields a new ``abort'' output $\vec{a}=\perp$ when one of the bad events is obtained (while according to $P$, the probability of $\vec{a} = \perp$ is zero).
Effectively, $\tilde P$ can be thought of as post-processed version of the physical behavior $P$.
Though this post-processed version cannot be achieved in practice by the user of the devices (since he does not know the set of bad events), it is well-defined physically (it could for instance be implemented by an adversary having a perfect knowledge of $P$).
The relevant point is that since the probability of these bad events is extremely low for sufficiently small $\epsilon$, the behaviors $P$ and $\tilde P$ are, as shown below, close in variation distance, and analyzing the security using $\tilde P$ instead of $P$ thus yields the same result up to vanishing error terms.
(See \cite{PM13} for a more detailed discussion.)

\begin{lemma} \label{thm:lemma3}
There exists a behavior \(\tilde P_{A \given X}\) such that \(\tilde P_{AX}=\tilde P_{A \given X}\times \Pi_X\) and \(P_{AX}=P_{A \given X}\times \Pi_X\) are \(\epsilon\)-close in variation distance, i.e.,
\begin{equation} \label{eq:variationdistance}
d(\tilde P_{AX},P_{AX}) = \frac{1}{2}\sum_{\vec{a},\vec{x}} \abs{ \tilde P(\vec{a},\vec{x}) - P(\vec{a},\vec{x}) } \leq \epsilon \, ,
\end{equation}
and such that for any \(\vec{a}\neq \perp\)
\begin{equation}\label{eq:Q_bound}
-\log_2 \tilde P(\vec{a} \given \vec{x})\geq nH(\mathcal{V})-\nu(\vec{x})\eta\, .
\end{equation}
\end{lemma}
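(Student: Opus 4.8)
The plan is to realise explicitly the post-processed behavior sketched in the paragraph preceding the lemma, and then to verify its two properties by direct calculation. First I would define $\tilde P_{A\given X}$ on the enlarged output alphabet $\mathcal{A}^n\cup\{\perp\}$ as follows: for each $\vec{x}$, keep $\tilde P(\vec{a}\given\vec{x})=P(\vec{a}\given\vec{x})$ on the ``good'' events $(\vec{a},\vec{x})\in V$; set $\tilde P(\vec{a}\given\vec{x})=0$ for every $\vec{a}\neq\perp$ on the ``bad'' events $(\vec{a},\vec{x})\notin V$; and place all the removed mass on the abort symbol,
\begin{equation}
\tilde P(\perp\given\vec{x})=\sum_{\vec{a}\,:\,(\vec{a},\vec{x})\notin V}P(\vec{a}\given\vec{x})\,.
\end{equation}
This sums to one for each $\vec{x}$, so $\tilde P_{A\given X}$ is a valid behavior, and we set $\tilde P_{AX}=\tilde P_{A\given X}\times\Pi_X$ with the \emph{same} input distribution $\Pi_X$ as $P_{AX}$, as demanded by the statement.

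Next I would check eq.~\eqref{eq:Q_bound}. Fix $\vec{a}\neq\perp$. If $(\vec{a},\vec{x})\in V$, then $\tilde P(\vec{a}\given\vec{x})=P(\vec{a}\given\vec{x})$ and Lemma~\ref{thm:pxcve_bound} gives directly $-\log_2\tilde P(\vec{a}\given\vec{x})\geq nH(\mathcal{V})-\nu(\vec{x})\eta$. If $(\vec{a},\vec{x})\notin V$, then $\tilde P(\vec{a}\given\vec{x})=0$ and $-\log_2\tilde P(\vec{a}\given\vec{x})=+\infty$, so the bound holds trivially. This disposes of the surprisal claim.

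It then remains to bound the variation distance. The joint probabilities $\tilde P(\vec{a},\vec{x})$ and $P(\vec{a},\vec{x})$ coincide on all good events, they differ by $-P(\vec{a},\vec{x})$ on the $\perp$-free bad events, and they differ by $+\Pi(\vec{x})\tilde P(\perp\given\vec{x})$ at the abort symbol (where $P(\perp,\vec{x})=0$). Summing absolute values, each of these two families of terms contributes exactly $\sum_{(\vec{a},\vec{x})\notin V}P(\vec{a},\vec{x})=\Pr[(\vec{a},\vec{x})\notin V]$, which is at most $\epsilon$ by Definition~\ref{def:confreg}; hence $d(\tilde P_{AX},P_{AX})=\tfrac12\cdot 2\epsilon'\leq\epsilon$ with $\epsilon'=\Pr[(\vec{a},\vec{x})\notin V]$.

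I do not expect a genuine obstacle here; the argument is essentially bookkeeping. The two points that need care are (i) comparing $\tilde P_{AX}$ and $P_{AX}$ with a common input marginal $\Pi_X$, which holds by construction so that the discrepancy lives entirely in the output coordinate, and (ii) tracking the abort symbol in $\tfrac12\sum\abs{\cdot}$: the mass stripped from the bad events equals the mass placed at $\perp$, so the two contributions are each equal to $\Pr[\text{bad}]$ and the prefactor $\tfrac12$ exactly cancels the doubled count. It is worth noting that $\tilde P$ need no longer admit a round-by-round quantum representation, but this is harmless, since the lemma only asserts the existence of a behavior, not of a quantum one.
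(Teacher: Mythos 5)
Your proposal is correct and follows essentially the same route as the paper: the same explicit definition of $\tilde P$ (copy $P$ on $V$, zero out bad events, dump their mass on $\perp$), with eq.~\eqref{eq:Q_bound} following from Lemma~\ref{thm:pxcve_bound} on good events and trivially on bad ones, and the variation-distance bound from the fact that the total probability moved is $\Pr[(\vec a,\vec x)\notin V]\leq\epsilon$. The only cosmetic remark is that reusing the symbol $\epsilon'$ for $\Pr[(\vec a,\vec x)\notin V]$ clashes with the $\epsilon'$ of Theorem~\ref{thm:Theo1}; pick a different local name.
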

\begin{proof}[Proof of Lemma \ref{thm:lemma3}]
The proof of this lemma is analogous to that of Lemma~3 in \cite{PM13}.
Define $\tilde P_{A \given X}$ as
\begin{equation}
\tilde P(\vec{a} \given \vec{x}) =
\begin{cases}
P(\vec{a} \given \vec{x}) & \text{if } (\vec{a},\vec{x}) \in V,\\
0 & \text{if } (\vec{a},\vec{x}) \notin V\text{ and }\vec{a} \neq \perp,\\
\sum_{\vec{a} : (\vec{a},\vec{x})\notin V} P(\vec{a} \given \vec{x})& \text{if } \vec{a} =\perp .
\end{cases}
\end{equation}
Eq.~\eqref{eq:variationdistance} follows immediately, and Lemma~\ref{thm:pxcve_bound} implies eq.~\eqref{eq:Q_bound}.
\end{proof}

We can now put a bound on the randomness of the Bell device as follows.
Let $\lambda$ denote the event that $nH(\mathcal{V})-\nu(\vec{x})\eta$ is greater than or equal to some a priori fixed threshold $\tresh$.
Conditioned on $\lambda$ occurring, we can bound the conditional min-entropy of the outputs given the inputs, $H_{\min}(A \given X;\lambda) =  -\log_2 \sum_{\vec{x}} \tilde P(\vec{x} \given \lambda)  \max_{\vec{a}} \tilde P(\vec{a} \given \vec{x};\lambda)$, as follows (see \cite{KRS09} for a more detailed discussion of the concept of min-entropy and its relevance in our context):
\begin{align}
H_{\min}(A \given X;\lambda) &=  -\log_2 \sum_{\vec{x}} \tilde P(\vec{x} \given \lambda)  \max_{\vec{a}} \tilde P(\vec{a} \given \vec{x};\lambda)\\
& = -\log_2 \sum_{\vec{x}} \frac{\tilde P(\vec{x} \given \lambda)}{\tilde P(\lambda \given \vec{x})}  \max_{\vec{a}\in \Lambda_{\vec{x}}} \tilde P(\vec{a} \given \vec{x})\\
& \geq -\log_2 \sum_{\vec{x}} \frac{\tilde P(\vec{x})}{\tilde P(\lambda)} 2^{-\tresh}\\
 & \geq \tresh -\log_2 \frac{1}{\tilde P(\lambda)}\,.\label{eq:h}
\end{align}
In the second line we defined $\Lambda_{\vec{x}}$ as the set of $\vec{a}$'s such that the event $\lambda$ occurs given $\vec{x}$, and in the third line we used eq.~\eqref{eq:Q_bound} and the fact that $nH(\mathcal{V})-\nu(\vec{x})\eta\geq\tresh$ by the definition of $\lambda$.
Comparing $\tilde P(\lambda)$ to some positive $\epsilon'$ directly implies the following result:
\begin{theorem}\label{thm:Theo1}
Let \(\epsilon\) and \(\epsilon'\) be two positive parameters, let \(\tresh\) be some threshold, and let \(\lambda\) be the event that \(nH(\mathcal{V}) -\nu(\vec{x})\eta \geq \tresh\), where \(\mathcal{V}\) is a \(1-\epsilon\) confidence region according to Definition~\ref{def:confreg}.
Then the behavior \(P_{AX}\) is \(\epsilon\)-close to a behavior \(\tilde P_{AX}\) such that, according to \(\tilde P_{AX}\),
\begin{enumerate}
\item either \(\Pr(\lambda) \leq \epsilon'\),
\item or \(H_{\min}(A \given X;\lambda)\geq \tresh-\log_2\frac{1}{\epsilon'}\).
\qed
\end{enumerate}
\end{theorem}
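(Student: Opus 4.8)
The plan is to collect the ingredients already assembled in Lemmas~\ref{thm:Pax_bound}--\ref{thm:lemma3} together with the min-entropy estimate displayed just above, so that the proof reduces to a single case distinction. By Lemma~\ref{thm:lemma3} there is a behavior $\tilde P_{A \given X}$ with $\tilde P_{AX} = \tilde P_{A \given X} \times \Pi_X$ that is $\epsilon$-close to $P_{AX}$ in variation distance and that satisfies $-\log_2 \tilde P(\vec a \given \vec x) \geq nH(\mathcal V) - \nu(\vec x)\eta$ for every $\vec a \neq \perp$. This $\tilde P_{AX}$ already witnesses the ``$\epsilon$-close'' part of the statement, so it remains only to verify the dichotomy for probabilities computed under $\tilde P_{AX}$.

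First I would dispose of the trivial branch: if $\Pr(\lambda) = \tilde P(\lambda) \leq \epsilon'$, item~1 holds and there is nothing more to do. So suppose instead $\tilde P(\lambda) > \epsilon' > 0$; then $\lambda$ has positive probability and the conditional min-entropy $H_{\min}(A \given X;\lambda)$ is well-defined. At this point I would run the chain of inequalities displayed just before eq.~\eqref{eq:h}: write $\tilde P(\vec a \given \vec x;\lambda) = \tilde P(\vec a \given \vec x)/\tilde P(\lambda \given \vec x)$ for $\vec a \in \Lambda_{\vec x}$ (it vanishes for $\vec a \notin \Lambda_{\vec x}$), use Bayes' rule in the form $\tilde P(\vec x \given \lambda)/\tilde P(\lambda \given \vec x) = \tilde P(\vec x)/\tilde P(\lambda)$, and bound $\max_{\vec a \in \Lambda_{\vec x}} \tilde P(\vec a \given \vec x) \leq 2^{-\tresh}$ using eq.~\eqref{eq:Q_bound} and the definition of $\lambda$. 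Summing over $\vec x$ gives $\sum_{\vec x} \tilde P(\vec x \given \lambda) \max_{\vec a} \tilde P(\vec a \given \vec x;\lambda) \leq 2^{-\tresh}/\tilde P(\lambda)$, hence $H_{\min}(A \given X;\lambda) \geq \tresh - \log_2 \tfrac{1}{\tilde P(\lambda)} \geq \tresh - \log_2 \tfrac{1}{\epsilon'}$, which is item~2.

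The one step deserving care is the inequality $\max_{\vec a \in \Lambda_{\vec x}} \tilde P(\vec a \given \vec x) \leq 2^{-\tresh}$, since eq.~\eqref{eq:Q_bound} is stated only for $\vec a \neq \perp$: one must know that $\perp \notin \Lambda_{\vec x}$, i.e., that $\lambda$ never occurs on the aborting outcome. This is natural because $\perp$ arises precisely on the bad events $(\vec a,\vec x) \notin V$, for which $H(\mathcal V)$ --- and therefore $nH(\mathcal V) - \nu(\vec x)\eta$ --- is small, so the threshold $\tresh$ is not met; equivalently one simply adopts the convention that $\lambda$ is defined to exclude $\vec a = \perp$. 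Apart from this bookkeeping I do not expect any obstacle: all the analytic content sits in the earlier lemmas and in the surprisal-to-min-entropy computation, and Theorem~\ref{thm:Theo1} is just their packaging via the comparison of $\tilde P(\lambda)$ with $\epsilon'$.
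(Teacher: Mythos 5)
Your proposal is correct and follows essentially the same route as the paper: the paper obtains Theorem~\ref{thm:Theo1} directly from Lemma~\ref{thm:lemma3} together with the chain of inequalities culminating in eq.~\eqref{eq:h}, and then simply compares $\tilde P(\lambda)$ with $\epsilon'$, exactly as you do. Your remark about excluding $\vec a = \perp$ from $\Lambda_{\vec x}$ is a reasonable piece of bookkeeping that the paper leaves implicit, and it does not alter the argument.
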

The meaning of this result is as follows.
Suppose that we are able to compute a RB function according to Definition~\ref{def:guess} and, from the results $(\vec a, \vec x)$ of $n$ rounds of measurements, a $1-\epsilon$ confidence region according to Definition~\ref{def:confreg}.
We may thus compute the value of $nH(\mathcal{V})$ and check whether it is above the chosen threshold $\tresh$, i.e., whether the event $\lambda$ occurred.

The given physical device that we used to generate the results $(\vec a, \vec x)$ is characterized by an unknown behavior $P$.
The theorem indirectly characterizes the behavior $P$, by showing the existence of an $\epsilon$-close behavior $\tilde P$, where $\epsilon$ can be chosen arbitrarily small.
The probability difference between the two distributions is thus at most $\epsilon$ for any event, and $P$ and $\tilde P$ are almost indistinguishable.
The theorem states that, assuming that the event $\lambda$ occurs, the behavior $\tilde P$ is one of two possible kinds.

The first possibility if the event $\lambda$ is observed is that the conditional min-entropy of $\tilde P$ is higher than $\tresh - \log_2 \frac{1}{\epsilon'}$.
This implies that $\tilde P$ contains extractable randomness: one can use a randomness extractor to process the raw outputs $\vec{a}$ and obtain a final string of bits, which is close to uniformly random according to $\tilde P$ and whose size is essentially $\tresh-\log_2 \frac{1}{\epsilon'}$ (the length and randomness of the output string will also depend on a security parameter $\epsilon_{\mathrm{ext}}$ of the extractor itself) \cite{DPVR12} .
Since $P$ is $\epsilon$-close to $\tilde P$, it follows that the output string will also be essentially uniformly random according to the actual behavior $P$ of the device (see Section~III.D of \cite{PM13} for details).

The second possibility is that the event $\lambda$ occurred while being very unlikely: according to $\tilde P$, \(\Pr(\lambda) \leq \epsilon'\), and thus, according to $P$, \(\Pr(\lambda) \leq \epsilon'+\epsilon\), where $\epsilon'$ can be chosen arbitrarily small.
In this case there is no guaranteed lower bound on the conditional min-entropy.
We cannot, of course, avoid such a possibility.
For instance, a Bell device that simply outputs predetermined bits, which have been chosen uniformly at random by an adversary, will have zero conditional min-entropy, but may still pass any statistical test we can devise with some positive probability.
Nevertheless, in this case, since $\lambda$ is unlikely, the impact on the security of the protocol of (mistakenly) assuming that the conditional min-entropy bound of the Theorem holds, will be negligible.
We refer to Section~III.D of \cite{PM13} for more details on how Theorem~\ref{thm:Theo1} translates to a secure randomness generation protocol.

Note that more generally, one can use a sequence of thresholds $H_0<H_1<\dotsb<H_\ell$, rather than a single threshold.
Theorem~\ref{thm:Theo1} then becomes a set of individual statements, regarding events $\lambda_i$ where $nH(\mathcal V)-\nu(\vec{x})\eta \in \cointerval{H_i,H_{i+1}}$ for $i = 0, \dotsc, \ell-1$.
This means that the protocol admits intermediate thresholds of success leading to increasingly better min-entropy bounds, rather than being a single-threshold, all-or-nothing protocol.

\section{Estimation}\label{sec:estim}
In this section we explicitly illustrate how to construct a confidence region, according to Definition~\ref{def:confreg}, using a straightforward estimator for the Bell expectations $\bvec{f}[p]$ and applying the Azuma-Hoeffding inequality, as proposed in \cite{PAM+10}.
Note that it is possible to use other (tighter) concentration inequalities than the Azuma-Hoeffding inequality.
In particular, we do not claim our specific choice to be optimal for a finite number of rounds $n$.

Let $(\vec{a},\vec{x})$ be the output-input sequence obtained in a certain realization of the $n$-round protocol.
We define the \emph{observed frequencies} as an estimation of the average behavior of the device based on the observed data,
\begin{equation}
\label{eq:freq}
\hat p(a \given x) = \frac{\#(a,x)}{n\pi(x)} \,,
\end{equation}
where $\#(a,x)$ is the number of occurrences of the output-input pair $(a,x)$ in the $n$ rounds.
As with probabilities, we refer to the full set of observed frequencies $(\hat p(a \given x))$ as a vector $\hat p$.

We define resulting estimators for the Bell expressions by substituting $\hat p$ for $p$ in \eqref{eq:bell}:
\begin{equation}\label{eq:estimator}
\bvec{f}[\hat p]
 = \sum_{a,x} \bvec{f}(a,x) \hat p(a \given x)
 = \frac{1}{n}\sum_{i=1}^n \frac{\bvec{f}(a_i,x_i)}{\pi(x_i)} \,.
\end{equation}
To ease the notation, in the following we sometimes write $\hat{\bvec{f}}$ instead of $\bvec f[\hat p]$.
It should be kept in mind that $\hat p$ and $\hat{\bvec{f}}$ are random variables, being functions of the observed event $(\vec{a},\vec{x})$.

As shown in \cite{PAM+10}, a simple application of the Azuma-Hoeffding inequality yields the following result:
\begin{lemma}\label{thm:azuma0}
For any \(\alpha=1,\dotsc,t\), let \(\epsilon_\alpha^\pm>0\) and let
\begin{equation}\label{eq:muxv}
\mu_\alpha^\pm = \gamma_\alpha \sqrt{\frac{2}{n}\ln \frac{1}{\epsilon_\alpha^\pm}}\,,
\end{equation}
where
\begin{equation}\label{eq:nu}
\gamma_\alpha \geq \max_{p\in\mathcal{Q}}\,\max_{a,x} \abs*{\frac{f_\alpha(a,x)}{\pi(x)}-f_\alpha[p]}\,.
\end{equation}
Then
\begin{equation}
\Pr\mleft[\frac{1}{n}\sum_{j=1}^n f_\alpha[p_{\vec{a}_{j-1},\vec{x}_{j-1}}]\leq \hat f_\alpha+\mu_\alpha^+ \mright]\geq 1-\epsilon_\alpha^+
\end{equation}
and
\begin{equation}
\Pr\mleft[\frac{1}{n}\sum_{j=1}^n f_\alpha[p_{\vec{a}_{j-1},\vec{x}_{j-1}}]\geq \hat f_\alpha-\mu_\alpha^- \mright]\geq 1-\epsilon_\alpha^-\,.
\end{equation}
\end{lemma}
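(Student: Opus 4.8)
For Lemma~\ref{thm:azuma0} the plan is to read both inequalities as lower- and upper-tail bounds for one and the same martingale and then invoke the Azuma--Hoeffding inequality. Fix $\alpha$, let $\mathcal{F}_j$ denote the $\sigma$-algebra generated by the first $j$ inputs and outputs $(\vec{a}_j,\vec{x}_j)$, with $\mathcal{F}_0$ trivial, and set
\begin{equation*}
D_j = \frac{f_\alpha(a_j,x_j)}{\pi(x_j)} - f_\alpha\mleft[p_{\vec{a}_{j-1},\vec{x}_{j-1}}\mright] \qquad (j = 1,\dotsc,n)\,.
\end{equation*}
By the definition \eqref{eq:estimator} of the estimator, $\frac{1}{n}\sum_{j=1}^n D_j = \hat f_\alpha - \frac{1}{n}\sum_{j=1}^n f_\alpha[p_{\vec{a}_{j-1},\vec{x}_{j-1}}]$, so the event $\bigl\{\frac{1}{n}\sum_j f_\alpha[p_{\vec{a}_{j-1},\vec{x}_{j-1}}] \ge \hat f_\alpha + \mu_\alpha^+\bigr\}$ coincides with $\bigl\{\sum_j D_j \le -n\mu_\alpha^+\bigr\}$, and $\bigl\{\frac{1}{n}\sum_j f_\alpha[p_{\vec{a}_{j-1},\vec{x}_{j-1}}] \le \hat f_\alpha - \mu_\alpha^-\bigr\}$ coincides with $\bigl\{\sum_j D_j \ge n\mu_\alpha^-\bigr\}$. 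It thus suffices to bound these two tails.

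First I would verify that $(D_j)$ is a martingale difference sequence adapted to $(\mathcal{F}_j)$. The term $f_\alpha[p_{\vec{a}_{j-1},\vec{x}_{j-1}}]$ is $\mathcal{F}_{j-1}$-measurable, since by the assumptions of Section~\ref{sec:bell} the round-$j$ behavior $p_{\vec{a}_{j-1},\vec{x}_{j-1}}$ is a fixed function of the past. Conditioned on $\mathcal{F}_{j-1}$, those same assumptions say that $x_j$ is drawn from $\pi$ independently of the device, and $a_j$ is then produced with probability $p_{\vec{a}_{j-1},\vec{x}_{j-1}}(a_j \given x_j)$; hence
\begin{equation*}
\Ebb\mleft[\frac{f_\alpha(a_j,x_j)}{\pi(x_j)}\,\Big\vert\,\mathcal{F}_{j-1}\mright] = \sum_{a,x}\pi(x)\,p_{\vec{a}_{j-1},\vec{x}_{j-1}}(a \given x)\,\frac{f_\alpha(a,x)}{\pi(x)} = f_\alpha\mleft[p_{\vec{a}_{j-1},\vec{x}_{j-1}}\mright]\,,
\end{equation*}
so that $\Ebb[D_j \given \mathcal{F}_{j-1}] = 0$. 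Next I would note that, because $p_{\vec{a}_{j-1},\vec{x}_{j-1}} \in \mathcal{Q}$, the choice of $\gamma_\alpha$ in \eqref{eq:nu} forces $\abs{D_j} \le \gamma_\alpha$ almost surely, uniformly in $j$.

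With the martingale property and the uniform increment bound $c_j = \gamma_\alpha$ in hand, the Azuma--Hoeffding inequality gives $\Pr[\sum_{j=1}^n D_j \le -s] \le \exp(-s^2/(2n\gamma_\alpha^2))$ and the same bound for $\Pr[\sum_{j=1}^n D_j \ge s]$, for every $s > 0$. Taking $s = n\mu_\alpha^+$ and substituting $\mu_\alpha^+ = \gamma_\alpha\sqrt{\frac{2}{n}\ln\frac{1}{\epsilon_\alpha^+}}$ makes the exponent equal $-\ln\frac{1}{\epsilon_\alpha^+}$, so the lower tail is at most $\epsilon_\alpha^+$; by the reformulation in the first paragraph this is the first claim. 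Taking $s = n\mu_\alpha^-$ in the upper-tail bound yields the second claim identically. I expect the only substantive point to be the verification of $\Ebb[D_j \given \mathcal{F}_{j-1}] = 0$: it rests on the input at round $j$ being distributed as $\pi$ regardless of the history-dependent measurements the device has chosen, which is exactly the assumption from Section~\ref{sec:bell} that inputs are selected independently of the state of the device. Everything else is bookkeeping and a direct substitution into Azuma--Hoeffding.
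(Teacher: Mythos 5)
Your proof is correct and follows exactly the route the paper intends: the paper states Lemma~\ref{thm:azuma0} as ``a simple application of the Azuma--Hoeffding inequality'' following \cite{PAM+10}, and your martingale-difference construction $D_j = f_\alpha(a_j,x_j)/\pi(x_j) - f_\alpha[p_{\vec{a}_{j-1},\vec{x}_{j-1}}]$, with the increment bound from \eqref{eq:nu} and the substitution of \eqref{eq:muxv} into the two tails, is precisely that argument. The one point you flag as substantive---the vanishing conditional expectation, resting on the inputs being drawn from $\pi$ independently of the device's past-dependent behavior---is indeed the assumption from Section~\ref{sec:bell} that makes it work, so nothing is missing.
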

Lemma~\ref{thm:azuma0} simply states that with high probability the $n$-round average $\frac{1}{n}\sum_{j=1}^n f_\alpha[p_{\vec{a}_{j-1},\vec{x}_{j-1}}]$, conditioned on the past, is no greater (no smaller) than the observed value $\hat f_\alpha$ plus (minus) some deviation $\mu_\alpha^+$ ($\mu_\alpha^-$).
This deviation tends to zero as $1/\sqrt{n}$ and directly depends on the quantity  $\gamma_\alpha$,
which represents an upper bound on the maximum possible value of $\abs{f_\alpha(a,x)/\pi(x) - f_\alpha[p]}$, that is to say, the maximal extent to which the random variable $f_\alpha(a,x)/\pi(x)$ can differ from its expectation $f_\alpha[p]$.
In other words, $\gamma_\alpha$ bounds the possible statistical fluctuations which our observations can be subject to.
A specific value for $\gamma_\alpha$ is given by
\begin{align}\label{eq:nu_tilde}
\tilde \gamma_\alpha = \max \mleft\lbrace \max_{a,x} \frac{f_\alpha(a,x)}{\pi(x)} - \min_{p\in\mathcal{Q}} f_\alpha[p],\> \max_{p\in\mathcal{Q}} f_\alpha[p] - \min_{a,x} \frac{f_\alpha(a,x)}{\pi(x)} \mright\rbrace.
\end{align}
The terms $\max_{a,x} \frac{f_\alpha(a,x)}{\pi(x)}$ and $\min_{a,x} \frac{f_\alpha(a,x)}{\pi(x)}$ are easy to calculate,
while the terms $\max_{p\in\mathcal{Q}} f_\alpha[p]$ and $\min_{p\in\mathcal{Q}} f_\alpha[p]$ can be computed through SDP using a NPA relaxation \cite{NPA08}.

We can combine the above upper and lower bounds for all $\alpha$ through a union bound to get the following confidence region:
\begin{lemma}\label{thm:azuma}
Given \(\epsilon_\alpha^\pm \geq 0\) for \(\alpha=1,\dotsc,t\), let \(\hat{\bff}^\pm\) be the vector \((\hat{f}^\pm_1,\dotsc,\hat{f}^\pm_t)\), where
\begin{align}\label{eq:muxvpm}
\hat{f}^\pm_\alpha =
\begin{cases}
\hat f_\alpha \pm \gamma_\alpha \sqrt{\frac{2}{n}\ln \frac{1}{\epsilon^\pm_\alpha}} &\text{if } \epsilon^\pm_\alpha>0 \,,\\
\pm\infty &\text{if } \epsilon^\pm_\alpha=0\,,
\end{cases}
\end{align}
with \(\hat{f}_\alpha\) as defined in eq.~\eqref{eq:estimator} and \(\gamma_\alpha\) as defined in eq.~\eqref{eq:nu}.

Let the confidence region
\begin{equation}\label{eq:confreg}
\mathcal{V}=\freg=\{\bvec{f}\in\mathbb{R}^t\,:\,\hat{\bff}^-\leq \bvec{f}\leq \hat{\bff}^+\}.
\end{equation}
Then
\begin{equation}
\Pr\mleft[\frac{1}{n}\sum_{j=1}^n \bvec{f}[p_{\vec{a}_{j-1},\;\vec{x}_{j-1}}]\in \freg\mright]\geq 1-\epsilon \, ,
\end{equation}
where \(\epsilon=\sum_{\alpha=1}^t(\epsilon^+_\alpha+\epsilon^-_\alpha)\).
\qed
\end{lemma}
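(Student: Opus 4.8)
The plan is to derive Lemma~\ref{thm:azuma} directly from Lemma~\ref{thm:azuma0} by a union bound over the $2t$ one-sided deviation events. First I would, for each $\alpha=1,\dotsc,t$ and each sign, invoke Lemma~\ref{thm:azuma0} with the parameters $\epsilon_\alpha^\pm$ and $\mu_\alpha^\pm = \gamma_\alpha\sqrt{\tfrac{2}{n}\ln\tfrac{1}{\epsilon_\alpha^\pm}}$ as in eq.~\eqref{eq:muxv}. This gives, for each $\alpha$, that $\frac{1}{n}\sum_{j=1}^n f_\alpha[p_{\vec{a}_{j-1},\vec{x}_{j-1}}] \le \hat f_\alpha + \mu_\alpha^+$ fails with probability at most $\epsilon_\alpha^+$, and similarly that $\frac{1}{n}\sum_{j=1}^n f_\alpha[p_{\vec{a}_{j-1},\vec{x}_{j-1}}] \ge \hat f_\alpha - \mu_\alpha^-$ fails with probability at most $\epsilon_\alpha^-$. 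In terms of the notation of the lemma, the first inequality is exactly $\frac{1}{n}\sum_j f_\alpha[\cdots] \le \hat f_\alpha^+$ and the second is $\frac{1}{n}\sum_j f_\alpha[\cdots] \ge \hat f_\alpha^-$, using the definition of $\hat f_\alpha^\pm$ in eq.~\eqref{eq:muxvpm}.

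Next I would handle the edge case $\epsilon_\alpha^\pm = 0$: there the corresponding bound $\hat f_\alpha^\pm = \pm\infty$ is trivially satisfied with probability $1$, contributing nothing to the union bound, which is consistent with setting its failure probability to $\epsilon_\alpha^\pm = 0$. Then I would observe that the conjunction of all $2t$ inequalities $\hat f_\alpha^- \le \frac{1}{n}\sum_j f_\alpha[p_{\vec{a}_{j-1},\vec{x}_{j-1}}] \le \hat f_\alpha^+$ over $\alpha=1,\dotsc,t$ is precisely the statement that the vector $\frac{1}{n}\sum_j \bvec{f}[p_{\vec{a}_{j-1},\vec{x}_{j-1}}]$ lies in the box $\freg = \{\bvec f\in\mathbb R^t : \hat{\bff}^- \le \bvec f \le \hat{\bff}^+\}$, where the vector inequalities are understood componentwise. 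By the union bound, the complementary event---that at least one of the $2t$ inequalities fails---has probability at most $\sum_{\alpha=1}^t(\epsilon_\alpha^+ + \epsilon_\alpha^-) = \epsilon$, so $\Pr[\frac{1}{n}\sum_j \bvec f[p_{\vec{a}_{j-1},\vec{x}_{j-1}}]\in\freg] \ge 1-\epsilon$, which is the claim.

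This proof is essentially routine: the only real content is organizing the union bound and matching the $\mu_\alpha^\pm$ notation of Lemma~\ref{thm:azuma0} with the $\hat f_\alpha^\pm$ notation of eq.~\eqref{eq:muxvpm}. The one point requiring a sentence of care is the componentwise reading of the vector inequalities defining $\freg$, together with the convention that $\pm\infty$ endpoints impose no constraint; I would state this explicitly so that the identification of the box-membership event with the conjunction of scalar events is unambiguous. There is no genuine obstacle here---the heavy lifting (the martingale/Azuma-Hoeffding argument) has already been done in Lemma~\ref{thm:azuma0}, and this lemma merely packages $t$ two-sided instances of it into a single confidence region of the form required by Definition~\ref{def:confreg}.
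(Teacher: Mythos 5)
Your proof is correct and follows exactly the route the paper intends: the lemma is stated with a \qed precisely because it is the immediate consequence of applying Lemma~\ref{thm:azuma0} to each of the $2t$ one-sided bounds (with the $\epsilon_\alpha^\pm=0$ cases imposing no constraint) and combining them via a union bound, which is what you do. Nothing is missing.
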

In eq.~\eqref{eq:confreg} the inequalities $\hat{\bff}^-\leq \bvec{f}\leq \hat{\bff}^+$---as all other vector inequalities in this paper---should be understood to hold component-wise, i.e., $\hat{f}^-_\alpha \leq f_\alpha \leq \hat{f}^+_\alpha$ for all $\alpha$.

Note that when $\epsilon^+_\alpha=0$ (or  $\epsilon^-_\alpha=0$), we are simply not putting any bound on $\frac{1}{n}\sum_{j=1}^n \bvec{f}[p_{\vec{a}_{j-1},\vec{x}_{j-1}}]$ from above (or below).
Indeed, it is not always useful to bound a Bell expression from both directions.
Consider, for instance, the CHSH expression.
It is well-known that the amount of certifiable randomness increases with the absolute value of the CHSH violation, increasing from $2$ (the maximal local value) to $2\sqrt{2}$ (the maximal quantum value) and from $-2$ (the minimal local value) to $-2\sqrt{2}$ (the minimal quantum value).
If we are estimating the randomness produced by our Bell device based only on the CHSH expression $f_\rchsh$, and strongly expect the CHSH expectation to be in the region $\ccinterval{2,2\sqrt{2}}$, then it is certainly desirable to lower-bound it as accurately as possible.
However, we have no interest in knowing that it is smaller than some value (since the randomness which can be certified is only affected by the lower bound in the region).
For a given $\epsilon=\epsilon^{+}_\rchsh+\epsilon^-_\rchsh$, we are therefore interested in setting $\epsilon^+_\rchsh=0$, so that $\epsilon^-_\rchsh$ is as large as possible, and thus $\hat{f}^-_\rchsh$ is as close as possible to $\hat f_\rchsh$.
However, if we have no a priori reason to expect the CHSH expression to lie in one region or the other, $\epsilon^\pm_\rchsh=\epsilon/2$ is the most natural choice.

\section{Bounding single-round randomness}\label{sec:sdp}
In Section~\ref{sec:rand} we showed how to put a bound on the randomness produced by a Bell device which is used $n$ times in succession, given a RB function $H$.
We now discuss how we can explicitly compute such a function.

The function $H$ is defined through two properties, as specified in Definition~\ref{def:guess}.
The first one is the condition that $-\log_2 p(a \given x)\geq H(\bvec{f}[p])$ for all $a\in\mathcal{A}$, all $x\in \mathcal{X}_r$, and all $p\in \mathcal{Q}$.
The optimal function satisfying this first condition is simply given by
\begin{equation}
\begin{IEEEeqnarraybox}[][c]{lCl'l}
\label{eq:h1}
\tilde H(\bvec{f}[p])  & = &           \min_{a\in\mathcal{A},x\in\mathcal{X}_r} \min_{p'} & -\log_2 p'(a \given x)\\
                  &   & \text{subject to} & \bvec f[p']= \bvec f[p] , \quad p'\in \mathcal{Q}\,.
\end{IEEEeqnarraybox}
\end{equation}
Alternatively, we can pass the $-\log_2$ to the left of the minimizations, which then become maximizations, and we can thus write $\tilde H(\bvec{f}[p])=-\log_2 \tilde G(\bvec{f}[p])$, where
\begin{equation}
\begin{IEEEeqnarraybox}[][c]{lCl'l}
\label{eq:guess1}
\tilde G(\bvec{f}[p])  & = &        \max_{a\in\mathcal{A},x\in\mathcal{X}_r} \max_{p'} & p'(a \given x)\\
                &   & \text{subject to} & \bvec f[p']= \bvec f[p] , \quad p'\in \mathcal{Q}\,.
\end{IEEEeqnarraybox}
\end{equation}

The functions $\tilde H$ and $\tilde G$ defined in this way have an intuitive interpretation.
For a fixed behavior $p$ and a fixed input $x$, $\tilde H=\min_{a\in\mathcal{A}}( -\log_2 p(a \given x) )$ is simply the min-entropy of the distribution $\{p(a \given x)\}_{a\in\mathcal{A}}$, while $\tilde G=2^{-\tilde H}=\max_{a\in\mathcal{A}} p(a \given x)$ is the associated guessing probability, i.e., the optimal probability to correctly guess the output $a$ given that we know that it is drawn from the distribution $\{p(a \given x)\}_{a\in\mathcal{A}}$.
Both these quantities represent measures of the output randomness.
However, we are generally interested in bounding the output randomness not only for a single input $x$, but simultaneously for a subset $\mathcal{X}_r$ of all the inputs.
In addition, we assume in the DI spirit that the full behavior $p$ of our Bell device is generally not known, and that the device is characterized only by the Bell expectations $\bvec{f}[p]$.
Taking the worst case of $\tilde H$ and $\tilde G$ over all inputs $x\in\mathcal{X}_r$ and all quantum behaviors $p$ compatible with the Bell expectations $\bvec{f}[p]$ leads to \eqref{eq:h1} and \eqref{eq:guess1}.

The second requirement in Definition~\ref{def:guess} is that $H$ should be a convex function.
This property is used in Lemma~\ref{thm:Pax_bound} to bound the randomness produced from $n$ successive measurement rounds.
However, the function defined by \eqref{eq:h1} is not necessarily convex.
For fixed values of $a\in \mathcal{A}$ and $x \in \mathcal{X}_r$, let us denote $\tilde H_{a,x}(\bvec{f}[p])$ the function defined by the interior minimization, i.e., the minimum over $p' \in \mathcal{Q}$ of $-\log_2 p'(a \given x)$ subject to the constraint that $\bvec f[p'] = \bvec f[p]$.
This is a convex minimization program and thus the functions $\tilde H_{a,x}(\bvec{f}[p])$ are all convex.
However, $\tilde H$ is obtained by taking the point-wise minimum $\tilde H(\bvec{f}[{p}])=\min_{a,x} \tilde H_{a,x}(\bvec f[p])$ of these functions, which will generally not be convex (see \cite{BSS14} for a specific example where this happens).
Similarly, the individual functions $\tilde G_{a,x}$ defined by the interior maximization in \eqref{eq:guess1} are concave, but $\tilde G$ will generally not be.

In order to obtain a convex function, we could simply define a function $H^*$ as the minimum over arbitrary convex combinations of the functions $\tilde H_{a,x}$, i.e., as the convex hull of \eqref{eq:h1}:
\begin{equation}
\begin{IEEEeqnarraybox}[][c]{lCl'l}
\label{eq:h2}
H^*(\bvec{f}[p])  & = &           \min_{\{q_{a,x} , p_{a,x} \}_{a\in\mathcal{A},x\in\mathcal{X}_r}} & \sum_{a\in\mathcal{A},x\in\mathcal{X}_r} q_{a,x} \tilde H_{a,x}(\bvec{f}[{p}_{a,x}])\\
                  &   & \text{subject to} & q_{a,x}\geq 0\,,\\
                  &   & & \sum_{a\in\mathcal{A},x\in\mathcal{X}_r}	 q_{a,x}=1\,,\\
                  &   & & \sum_{a\in\mathcal{A},x\in\mathcal{X}_r} q_{a,x}\bvec f[p_{a,x}]=\bvec f[p]\,.
\end{IEEEeqnarraybox}
\end{equation}
Similarly, the concave hull of \eqref{eq:guess1} is
\begin{equation}
\begin{IEEEeqnarraybox}[][c]{lCl'l}
\label{eq:guess1b}
G(\bvec{f}[p])  & = &           \max_{\{q_{a,x} , p_{a,x}\}_{a\in\mathcal{A},x\in\mathcal{X}_r}} & \sum_{a\in\mathcal{A},x\in\mathcal{X}_r} q_{a,x} \tilde G_{a,x}(\bvec{f}[{p}_{a,x}])\\
                &   & \text{subject to} & q_{a,x}\geq 0\,,\\
                &   & & \sum_{a\in\mathcal{A},x\in\mathcal{X}_r}	 q_{a,x}=1\,,\\
                &   & & \sum_{a\in\mathcal{A},x\in\mathcal{X}_r} q_{a,x}\bvec f[p_{a,x}]=\bvec f[p]\,.
\end{IEEEeqnarraybox}
\end{equation}
Note that it is not true any more that $H^*=-\log_2 G$, but it is easy to see that $H^* \geq H = -\log_2 G$.

Though the function $H^*$ defined through \eqref{eq:h2} is the tightest function satisfying the constraint of Definition ~\ref{def:guess}, it is not easy to deal with numerically because of the presence of the logarithms in the definitions of $\tilde H_{a,x}$.
We will thus instead use the lower-bound $H=-\log_2 G$, which obviously satisfies the first condition of Definition 1 (since $H^*\geq H$) as well as the second one (since $G$ is concave and nonnegative, $H=-\log_2 G$ is convex).
The interest is that the optimization problem \eqref{eq:guess1b} is simpler to evaluate than \eqref{eq:h2}.
Note first that \eqref{eq:guess1b} can be re-expressed as follows by absorbing the weights $q_{a,x}$ in the unnormalized quantum behaviors $\tilde p_{a,x}=q_{a,x}p_{a,x}$:
\begin{align}
\begin{split}\label{eq:guess4}
G(\bvec{f}[p])  = \max_{\{\tilde p_{a,x}\}_{a\in\mathcal{A},x\in\mathcal{X}_r}} &\quad \sum_{a\in\mathcal{A},x\in\mathcal{X}_r}\tilde p_{a,x}(a \given x)\\
\text{subject to} & \quad \sum_{a\in\mathcal{A},x\in\mathcal{X}_r} \tr[\tilde p_{a,x}]  =1,\\
 &\quad  \sum_{a\in\mathcal{A},x\in\mathcal{X}_r}\, \bvec{f}[\tilde p_{a,x}]=\bvec{f}[p] ,\\
& \quad \tilde p_{a,x}\in \tilde{\mathcal{Q}} \quad \forall a\in\mathcal{A}, \forall x\in \mathcal{X}_r \,.
\end{split}
\end{align}
In the above formulation, $\tilde{\mathcal{Q}}$ denotes the set of unnormalized quantum behaviors, the conditions $q_{a,x}\geq 0$ and $p_{a,x} \in\mathcal{Q}$ are equivalent to the single condition $\tilde p_{a,x} \in \tilde{\mathcal{Q}}$, and the condition $\sum_{a,x} q_{a,x}=1$ becomes $\sum_{a,x} \tr[\tilde p_{a,x}]=1$ where $\tr[p]=\sum_{a}p(a \given x)$ is the norm of $p$ (the expression $\tr[p]$ is independent of the choice of $x$, and it is equal to $1$ for normalized behaviors).
Problem \eqref{eq:guess4} cannot be solved in general since the set $\tilde{\mathcal{Q}}$ is hard to characterize, but it can be replaced with one of its NPA relaxations, in which case it becomes a SDP (since apart from the condition $\tilde p_{a,x}\in\tilde{\mathcal{Q}}$ all constraints and the objective function are linear).
This will in general only yield an upper bound on the optimal value $G$ (and thus a lower bound on $H^*$), but this is entirely sufficient for our purpose.

In the case where the set $\mathcal{X}_r$ contains a single input $x$, the optimization problem \eqref{eq:guess4} is essentially identical to the one introduced in \cite{NPS14,BSS14} and corresponds to maximizing an adversary's average guessing probability over all possible quantum strategies (the difference with \cite{NPS14,BSS14} is that we characterize the devices through an arbitrary number of Bell expectations $\bvec{f}[p]$, rather than a single Bell expression or the full set of probabilities $p(a \given x)$).
The general form \eqref{eq:guess4}, however, also applies to the case where $\mathcal{X}_r$ contains more than one input and represents one possible way to characterize the randomness of a subset of inputs (other suggestions have been made in \cite{BSS14}; the main reason for the present choice is that it satisfies the mathematical properties that are needed in our $n$-round analysis).
In the following, we refer to the function $G$ given by \eqref{eq:guess4} as the guessing probability of the behavior characterized by $\bvec{f}[p]$.

To apply our $n$-round analysis, we actually do not need to compute the value $H(\bvec{f}[p])=-\log_2 G(\bvec{f}[p])$ for a fixed value $\bvec{f}[p]$, but instead its worst-case bound over all quantum behaviors $p \in \mathcal Q$ for which $\bvec f[p] \in \mathcal{V}$.
If the confidence region $\mathcal{V}$ is defined as an interval $\freg$, as in the preceding section, this can simply be cast as the following optimization problem:
\begin{align}
\begin{split}\label{eq:guess5}
G(\freg)  = \max_{\{\tilde p_{a,x}\}_{a\in\mathcal{A},x\in\mathcal{X}_r}} &\quad \sum_{a\in\mathcal{A},x\in\mathcal{X}_r}\tilde p_{a,x}(a \given x)\\
\text{subject to} & \quad \sum_{a\in\mathcal{A},x\in\mathcal{X}_r} \tr[\tilde p_{a,x}]  =1,\\
 &\quad  \hat{\bff}^-\leq\sum_{a\in\mathcal{A},x\in\mathcal{X}_r}\, \bvec{f}[\tilde p_{a,x}]\leq \hat{\bff}^+ ,\\
& \quad \tilde p_{a,x}\in \tilde{\mathcal{Q}} \quad \forall a\in\mathcal{A}, \forall x\in \mathcal{X}_r .
\end{split}
\end{align}
Again this problem admits a SDP relaxation through the NPA hierarchy.
When $\freg \cap \bvec f[\mathcal Q] = \emptyset$, the optimization problem is infeasible.
In accordance with Definition~\ref{def:guess}, in that case we let $G(\freg) = 1$ or $H(\freg)=0$.

We conclude this discussion by noting that in specific cases such as that of \cite{PAM+10}, where $\bvec{f}$ is a single CHSH expression, the symmetries under relabelings of inputs and outputs imply that the formulations \eqref{eq:h1}, \eqref{eq:guess1}, \eqref{eq:h2}, and \eqref{eq:guess1b} are equivalent, since \eqref{eq:h1} is already convex.
In such cases, our RB function is the tightest function that satisfies Definition~\ref{def:guess}, by virtue of \eqref{eq:h1} being the tightest function that satisfies condition~1 of the Definition.

In the Appendix, we provide more intuition about the above problems by considering their dual formulations.
We also discuss in more detail their link with \cite{NPS14,BSS14}.

\section{Summary of the protocol}\label{sec:prot}
In the two preceding sections, we have specified a way of bounding the randomness within a $(1-\epsilon)$ confidence region $\mathcal{V} = \freg$ around the observed statistic $\bvec f[\hat p]$.
We can thus apply \autoref{thm:Theo1} to bound the min-entropy of the output string obtained after $n$ uses of the device.
Processing this raw string with a suitable extractor finally leads to a uniformly random and private string.
The resulting protocol is summarized in \autoref{fig:roverlineprot}.

\begin{figure}[htp]
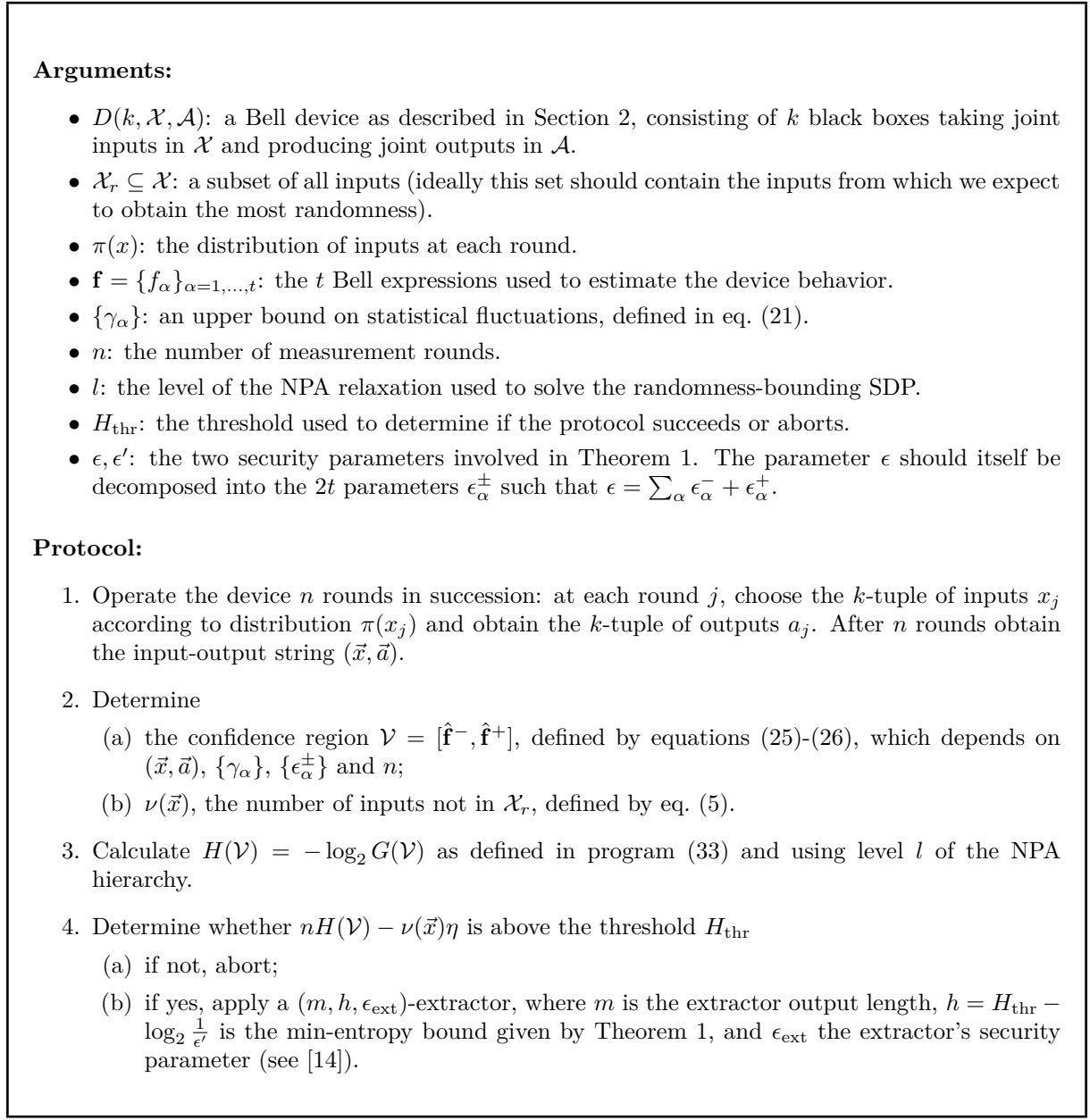

\begin{center}
\setlength{\fboxsep}{10pt}
\setlength{\fboxrule}{1pt}
\fbox{\parbox{\dimexpr\linewidth-2\fboxsep-2\fboxrule}{

\paragraph{Arguments:}
\begin{itemize}[itemsep=-1pt]%
\item $D(k,\mathcal{X},\mathcal{A})$: a Bell device as described in Section~\ref{sec:bell}, consisting of $k$ black boxes taking joint inputs in $\mathcal{X}$ and producing joint outputs in $\mathcal{A}$.

\item $ \mathcal{X}_r \subseteq \mathcal{X}$: a subset of all inputs (ideally this set should contain the inputs from which we expect to obtain the most randomness).

\item $\pi(x)$: the distribution of inputs at each round.

\item $ \bvec f = \{f_\alpha\}_{\alpha=1,\dotsc,t}$: the $t$ Bell expressions used to estimate the device behavior.

\item $\{\gamma_\alpha\}$: an upper bound on statistical fluctuations, defined in eq.~\eqref{eq:nu}.

\item $n$: the number of measurement rounds.

\item $l$: the level of the NPA relaxation used to solve the randomness-bounding SDP.

\item $\tresh$: the threshold used to determine if the protocol succeeds or aborts.

\item $\epsilon, \epsilon'$: the two security parameters involved in Theorem~1.
The parameter $\epsilon$ should itself be decomposed into the $2t$ parameters $\epsilon_\alpha^\pm$ such that $\epsilon = \sum_\alpha \epsilon^-_\alpha + \epsilon^+_\alpha$.

	\end{itemize}

\paragraph{Protocol:}

\begin{enumerate}%

\item Operate the device $n$ rounds in succession: at each round $j$, choose the $k$-tuple of inputs $x_j$ according to distribution $\pi(x_j)$ and obtain the $k$-tuple of outputs $a_j$.
After $n$ rounds obtain the input-output string $(\vec{x},\vec{a})$.

\item Determine
\begin{enumerate}[topsep=0pt]
\item the confidence region $\mathcal{V}=\freg$, defined by equations \eqref{eq:muxvpm}-\eqref{eq:confreg}, which depends on $(\vec{x},\vec{a})$, $\{\gamma_\alpha\}$, $\{\epsilon_\alpha^\pm\}$ and $n$;
\item $\nu(\vec{x})$, the number of inputs not in $\mathcal{X}_r$, defined by eq.~\eqref{eq:gamma}.
\end{enumerate}

\item Calculate $H(\mathcal{V})=-\log_2 G(\mathcal{V})$ as defined in program \eqref{eq:guess5} and using level $l$ of the NPA hierarchy.

\item Determine whether $nH(\mathcal{V})-\nu(\vec{x})\eta $ is above the threshold $\tresh$
	\begin{enumerate}[topsep=0pt]
	\item if not, abort;
	\item if yes, apply a $(m,h,\epsilon_{\mathrm{ext}})$-extractor, where $m$ is the extractor output length, $h = \tresh - \log_2 \frac{1}{\epsilon'}$ is the min-entropy bound given by Theorem~\ref{thm:Theo1}, and $\epsilon_{\mathrm{ext}}$ the extractor's security parameter (see \cite{PM13}).
	\end{enumerate}
\end{enumerate}
}}

\caption{DIRNG protocol following from Theorem~\ref{thm:Theo1} (Section~\ref{sec:rand}) and from the results of Sections~\ref{sec:estim}, \ref{sec:sdp}.}
\label{fig:roverlineprot}
\end{center}
\end{figure}

As we noted in Section~\ref{sec:rand}, one can define a similar protocol based on a sequence of thresholds $H_0<H_1<\dotsb<H_\ell$ rather than a single one, introducing intermediate levels of success in the protocol.
One advantage of this is that we do not need to determine what threshold we expect the device to reach and risk failing the protocol with high probability if we overestimated $\tresh$.
See Section~III.D of \cite{PM13} for details.

\section{Discussion}
\label{sec:discussion}
We have introduced a family of protocols, each characterized by a choice of $t$ Bell expressions $f_\alpha$, a randomness-generating input set $\mathcal{X}_r$, and an input distribution $\pi(x)$.
This family contains as a special case the protocols introduced in \cite{PAM+10, PM13, FGS13}, which correspond to the case where a single Bell expression $f$ is used ($t=1$) and where the randomness-bounding function covers all inputs ($\mathcal{X}_r=\mathcal{X}$).
The main novelty introduced in the present work is that we can take into account information from more Bell expressions $(t\geq 1)$ and can tailor the randomness analysis to a subset of all possible inputs ($\mathcal{X}_r\subseteq\mathcal{X}$).

In order to discuss these new aspects, in the following sections we illustrate our protocol on a concrete example.
The scenario in this example has two parties ($k=2$), two measurement settings per party ($\mathcal X = \{0,1\}^2$), and two outcome possibilities per measurement ($\mathcal A = \{0,1\}^2$).
We consider a device behavior
\begin{equation}
\label{eq:behavior}
p = v p_\text{ext} + (1-v) u
\end{equation}
for $v = 0.99$,
arising from a mixture of white noise $u$ and the extremal quantum behavior $p_\text{ext}$ that achieves maximal violation of the $I_1^\beta$ tilted-CHSH inequality introduced in \cite{AMP12}, with $\beta = 2\cos(2\theta)/\sqrt{1+\sin^2(2\theta)}$ for $\theta = \pi/8$.
The tilted-CHSH expression is defined as
\begin{equation}
\label{eq:tilted-chsh}
I_1^\beta = \beta \mean{A_0} + \mean{A_0 B_0} + \mean{A_0 B_1} + \mean{A_1 B_0} - \mean{A_1 B_1} \,.
\end{equation}
The extremal behavior can be achieved by a pair of partially entangled qubits $\ket\phi = \cos\theta \ket{00} + \sin\theta \ket{11}$ measured with observables
\begin{IEEEeqnarray}{rL"rL}
\IEEEyesnumber
\label{eq:operators}
\IEEEyessubnumber*
	A_0 &= \sigma_x \,, &
	A_1 &= \sigma_z \,, \\
	B_0 &= \cos\mu \, \sigma_z + \sin\mu \, \sigma_x \,, &
	B_1 &= \cos\mu \, \sigma_z - \sin\mu \, \sigma_x \,,
\end{IEEEeqnarray}
with $\tan\mu = \sin 2\theta$.
(Note the difference in notation from \cite{AMP12}: we relabelled the inputs $1$ and $2$ to $0$ and $1$, respectively.)

The resulting correlations have the property of giving more predictable outcomes for a subset of measurement inputs.
For $\theta = \pi/8$, the two measurement settings that give more predictable outcomes, $x = (0,0)$ and $x = (0,1)$, have a guessing probability of about $0.775$ in the ideal ($v=1$) case where $I_1^\beta$ is maximally violated.
On the other hand, the two measurement settings with less predictable outcomes, $x = (1,0)$ and $x = (1,1)$, have guessing probabilities of about $0.496$.

In the analysis of randomness, we will consider two choices for the randomness generating subset $\mathcal X_r$: the full input set $\mathcal X_r = \mathcal X$ and the more restricted choice $\mathcal X_r = \{(1,0)\}$, which is one of the two settings that give less predictable measurements in $p_\text{ext}$.

Furthermore, we will estimate three different Bell expressions, all defined in terms of the following correlators:
\begin{subequations}
\label{eq:correlators}
\begin{align}
\mean{A_\mathrm{x_1}} &= \sum_{\mathrm{a}_1,\mathrm{a}_2,\mathrm{x}_2 \in \{0,1\}} (-1)^\mathrm{a_1} \pi_2(\mathrm x_2 \given \mathrm x_1) p(\mathrm{a}_1\mathrm{a}_2 \given \mathrm{x}_1\mathrm{x}_2) & \mathrm{x}_1 \in \{0,1\} ,\\
\mean{B_\mathrm{x_2}} &= \sum_{\mathrm{a}_1,\mathrm{a}_2,\mathrm{x}_1 \in \{0,1\}} (-1)^\mathrm{a_2} \pi_1(\mathrm x_1 \given \mathrm x_2) p(\mathrm{a}_1\mathrm{a}_2 \given \mathrm{x}_1\mathrm{x}_2) & \mathrm{x}_2 \in \{0,1\} ,\\
\mean{A_\mathrm{x_1}B_\mathrm{x_2}} &= \sum_{\mathrm{a}_1,\mathrm{a}_2 \in \{0,1\}} (-1)^{\mathrm{a}_1+\mathrm{a}_2} p(\mathrm{a}_1\mathrm{a}_2 \given \mathrm{x}_1\mathrm{x}_2) & \mathrm{x}_1,\mathrm{x}_2 \in\{0,1\} . \label{eq:corr2}
\end{align}
\end{subequations}
The weights $\pi_1(\mathrm x_1 \given \mathrm x_2)$ and $\pi_2(\mathrm x_2 \given \mathrm x_1)$ represent the two conditional local input distributions defined with respect to the joint input distribution $\pi(\mathrm x_1 \mathrm x_2)$.\footnote{%
Note that for a no-signaling behavior, we could equivalently use any arbitrary set of probability weights in place of $\pi_1(\mathrm x_1 \given \mathrm x_2)$ or $\pi_2(\mathrm x_2 \given \mathrm x_1)$.
We choose this specific set of weights because it provides a better estimator for the marginal correlators $\mean{A_{\mathrm x_1}}$ and $\mean{B_{\mathrm x_2}}$ when applied to the observed frequencies $\hat p(\mathrm a_1 \mathrm a_2 \given \mathrm x_1 \mathrm x_2)$.
Indeed, it can be seen from the definition of a Bell estimator $f[\hat p]$ in eq.~\eqref{eq:estimator} that the marginal correlators reduce to a natural definition based on locally available data resulting only from the respective party's interaction with their part of the device, namely,
$\mean{A_{\mathrm x_1}} = \sum_{\mathrm a_1} (-1)^{\mathrm a_1} \#(\mathrm a_1,\mathrm x_1)/(n \pi_1(\mathrm x_1))$,
and similarly for $\mean{B_{\mathrm x_2}}$.
}

The expressions we will evaluate are the CHSH expression
\begin{equation}
\label{eq:chsh}
I_\rchsh = \mean{A_0B_0} + \mean{A_0B_1} + \mean{A_1B_0} - \mean{A_1B_1} \,,
\end{equation}
the tilted-CHSH expression $I_1^\beta$ \eqref{eq:tilted-chsh}
and the ``optimal'' expressions for the chosen device behavior \eqref{eq:behavior}:
\begin{equation}
\begin{split}
I_p ={} &
   10.610
  - 1.859 \, \mean{A_0}
  - 1.733 \, \mean{A_1}
  + 0.499 \, \mean{B_0}
  - 2.196 \, \mean{B_1} \\&
  - 3.109 \, \mean{A_0B_0}
  - 2.945 \, \mean{A_0B_1}
  - 2.610 \, \mean{A_1B_0}
  + 4.343 \, \mean{A_1B_1}
\end{split}
\end{equation}
and
\begin{equation}
\begin{split}
I_p^{\text{all}} ={} &
    3.131
  + 0.126 \, \mean{A_0}
  - 0.428 \, (\mean{B_0} + \mean{B_1}) \\&
  - 0.673 \, (\mean{A_0B_0} + \mean{A_0B_1})
  - 1.002 \, (\mean{A_1B_0} - \mean{A_1B_1})
  \,.
\end{split}
\end{equation}
These last two Bell expressions are ``optimal'' Bell expressions in the following sense.
As already observed in \cite{NPS14,BSS14}, the dual of problem \eqref{eq:guess4} (see eq.~\eqref{eq:g_v(p)} in the Appendix), when applied to a device characterized by its full behavior (i.e., $f_{a,x}[p]=p(a \given x)$ so that $\bvec{f}[p] = p$), finds a Bell expression $I_p$ such that the amount of randomness certified from $I_p[p]$ with respect to the measurement setting $x = (1,0)$ is equal to the amount of randomness that can be certified from the entire table of probabilities $p(a \given x)$ (again, with respect to the measurement $x = (1,0)$).
Thus, to each device behavior $p$ is associated a single Bell expression $I_p$ that is optimal for $p$ from the point of view of randomness.%
\footnote{%
More accurately, there exist infinitely many Bell expressions that are equivalent to $I_p$ up to terms that vanish for no-signaling behaviors.
In order to pick one that tolerates the small signaling fluctuations present in our behavior estimator $\hat p$, we run the computation of $I_p$ in the $8$-dimensional space of correlators, rather than the overspecified $16$-dimensional parametrization of quantum behaviors in terms of the probabilities $p(\mathrm a_1 \mathrm a_2 \given \mathrm x_1 \mathrm x_2)$.
We translate this expression back to a unique standard form \eqref{eq:bell} using definition \eqref{eq:correlators} for the correlators.
This ensures that the solution to the dual program \eqref{eq:g_v(p)} picked by our solver among many equivalent expressions does not contain terms that blow up under small signaling fluctuations.
See also \cite{RRMG17} for a finer analysis of noise tolerance in equivalent Bell expressions.
}
Likewise, $I_p^\text{all}$ is defined with respect to all inputs $x \in \mathcal X$ rather than the subset $\{(1,0)\}$.

\subsection{Bounding randomness for all inputs with one Bell expression (\texorpdfstring{$\mathcal X_r = \mathcal X$, $t = 1$}{X\textrinferior{} = X, t = 1})} \label{sec:pra_case}
Before discussing the novelties introduced in this work, let us start by briefly reviewing the case $t=1$ and $\mathcal{X}_r=\mathcal{X}$, which corresponds to the protocols introduced in \cite{PAM+10, PM13, FGS13}.
In this case, $\nu(\vec{x})$, the number of inputs not in $\mathcal{X}_r$, is always equal to zero, and according to Theorem~\ref{thm:Theo1}, the min-entropy of the output string is roughly equal to $nH(\mathcal{V})$.
Furthermore, the confidence region $\mathcal{V}$ reduces to a confidence interval $[\hat f^-,\hat f^+]$ around the estimated Bell violation $\hat f$.
Usually, the values of $\hat f$ that we expect to obtain in the protocol will fall in a region where $H(\hat f)$ is either monotonically increasing  or decreasing with $\hat f$, i.e., the interval is within either the upward- or downward-sloped region of the convex function $H(\hat f)$.
For instance, if $f$ is the CHSH expression, we may assume that the devices have been designed so that with very high probability $\hat f\geq 2$.
In that region, $H(\hat f)$ is indeed increasing with $\hat f$ (i.e., the randomness increases for increasing values of the CHSH expression).
Let us assume for definiteness that $H$ is increasing (the same kind of reasoning can be done if $H$ is decreasing).
Since we are looking for the minimal value of $H$ in the region $\mathcal{V}$ (see Lemma~\ref{thm:pxcve_bound}), it is then sufficient, as done in \cite{PAM+10, PM13, FGS13}, to take a one-sided interval $\cointerval{\hat f^-,\infty}$, and the minimal value of $H$ in the interval will then be $H(\hat f^-)$.
Considering again our CHSH example, we are interested in a guarantee that the CHSH value is above some threshold, which determines the randomness we can certify in the worst case, but it is useless to know that it is bounded from above (see also discussion at the end of Section~\ref{sec:estim}).
Taking the definition eq.~\eqref{eq:muxvpm} for $\hat f^-$, we thus get that the min-entropy of the output string is bounded (roughly speaking%
\footnote{%
Equation~\eqref{eq:Gf-} and the similar approximate bounds that follow should be understood as informal statements giving an order of magnitude for the min-entropy lower bound.
Contrary to the statement of Theorem~\ref{thm:Theo1}, this informal bound directly involves the estimator $\hat f$, which is a random variable.
As such, it might be subject to improbable but extreme fluctuations, in which case the bound does not correctly characterize the device.
In comparison, the min-entropy bound of Theorem~\ref{thm:Theo1} is expressed in terms of a fixed threshold.
Furthermore, the Theorem also accounts for the unlikely event that a device reaches this threshold only by chance.
}%
, and up to the $-\log_2(1/\epsilon')$ correction) as
\begin{align}\label{eq:Gf-}
H_\text{min} \gtrsim nH\mleft( \hat{f} - \gamma \sqrt{\frac{2}{n}\ln \frac{1}{\epsilon}} \mright)\,.
\end{align}
This is precisely the result of \cite{PAM+10, PM13, FGS13}, whose interpretation is quite intuitive:
the min-entropy after $n$ runs is equal to $n$ times the min-entropy for a single run, evaluated on the observed Bell violation $\hat{f}$ offset by a statistical parameter $\mu = \gamma \sqrt{(2/n) \ln (1/\epsilon)}$.
This correction accounts for the fact that even if a device has been built such that it produces a target Bell violation, statistical fluctuations may push the observed violation above what is expected.

This statistical correction depends on the security parameter $\epsilon$ and decreases with the number of runs $n$.
It also depends on the prefactor $\gamma$ defined in eq.~\eqref{eq:nu}.
This prefactor depends on the choice of Bell expression $f$, and also importantly on the input distribution $\pi(x)$.

As discussed in \cite{PAM+10,PM13}, the input distribution can be suitably chosen to optimize the ratio $R_{\text{out}}/R_{\text{in}}$ of the randomness that is produced to the randomness that is consumed when choosing the inputs.
The idea is that if at each run one selects with very high probability a given input $x=x^*$, then the resulting distribution $\pi(x)$ can be sampled from a small number of initial uniform bits $R_{\text{in}}$, which should improve the ratio $R_{\text{out}}/R_{\text{in}}$.
However, this will also lower $R_{\text{out}}$ because observations involving the other inputs $x\neq x^*$ will be less frequent, which will reduce the statistical accuracy.
Consider for instance, as in \cite{PAM+10,PM13}, the case where the input $x^*$ is chosen with probability $\pi(x^*)=1-\kappa n^{-\delta}$ for some constants $\kappa$ and $\delta$, and the other inputs are chosen with probability $\pi(x)=\kappa' n^{-\delta}$, where $\kappa'=\kappa/(\abs{\mathcal X}-1)$ for normalization.
Then the initial randomness $R_{\text{in}}$ required to choose the inputs according to this distribution will be of size $O(n^{1-\delta}\ln n^\delta)$ (i.e., roughly $n$ times the Shannon entropy of the input distribution, see Theorem~2 in \cite{PM13}).
On the other hand, according to eq.~\eqref{eq:Gf-}, the output randomness will be of size $\Omega(n)$ as long as the statistical correction, of order $\gamma/\sqrt{n}$, remains bounded by a constant.
Since, according to eq.~\eqref{eq:nu}, $\gamma\simeq 1/(\min_x \pi(x))$ we get that the statistical correction is of order $\gamma/\sqrt{n}=O(n^{\delta-\frac{1}{2}})$ and thus that we should take $\delta\leq \frac{1}{2}$.
We can thus hope at best a quadratic expansion wherein $O(n^{\frac{1}{2}}\ln n^{\frac{1}{2}})$ initial bits are consumed and $\Omega(n)$ are produced.

Note that the initial randomness for choosing the inputs only needs to be random with respect to the devices, but can be publicly announced to the adversary without compromising the privacy of the output string \cite{PM13,CSW14}.
One can thus view the above protocols as producing private randomness from public randomness.
From this perspective, the ``expansion'' efficiency of the protocol is less relevant since the final and initial randomness correspond to different resources that do not necessarily have to be compared on the same footing.

We generated random samples of $n$ input-output pairs $\vec{a},\vec{x}$ from the behavior $p$ corresponding to equation~\eqref{eq:behavior} with the following input distribution
\begin{align}
\label{eq:input-dist}
\pi(x) =
\begin{cases}
1 - \frac{3}{2} n^{-1/5} \quad &\text{if } x =  (1,0) \,, \\
\frac{1}{2} n^{-1/5} \quad &\text{otherwise.}
\end{cases}
\end{align}
Note that as $n$ grows, the input distribution becomes strongly biased to select $x = (1,0)$ most of the time.

We performed this sampling independently for different values of $n$ between $100$ and $3 \times 10^{18}$.
For each value of $n$, we repeated this sampling $300$ times in order to show the variation of our result over several simulations.

The corresponding min-entropy rate bound (that is, \eqref{eq:Gf-} divided by $n$) for $\epsilon=10^{-6}$ is represented in Figure~\ref{fig:1} as a function of the number of runs $n$ for different Bell expressions.
The curves in this plot and the ones that follow (Figures~\ref{fig:1}--\ref{fig:5}) show the values for the first simulation out of the $300$, and the range of values taken over all $300$ simulations is drawn as a shaded area behind each curve.
In some instances, usually for high values of $n$, the area is invisible, which indicates a negligible variation across simulation runs.
All curves are obtained by solving the program~\eqref{eq:guess5} in its dual form~\eqref{eq:noisy_conic_dual3} (see Appendix) at level 2 of the NPA hierarchy.
All optimizations were performed using the \textsc{Matlab} toolboxes \textsc{Yalmip} \cite{yalmip} and SeDuMi \cite{sedumi}.

\begin{figure}[t]
\centering
\includegraphics{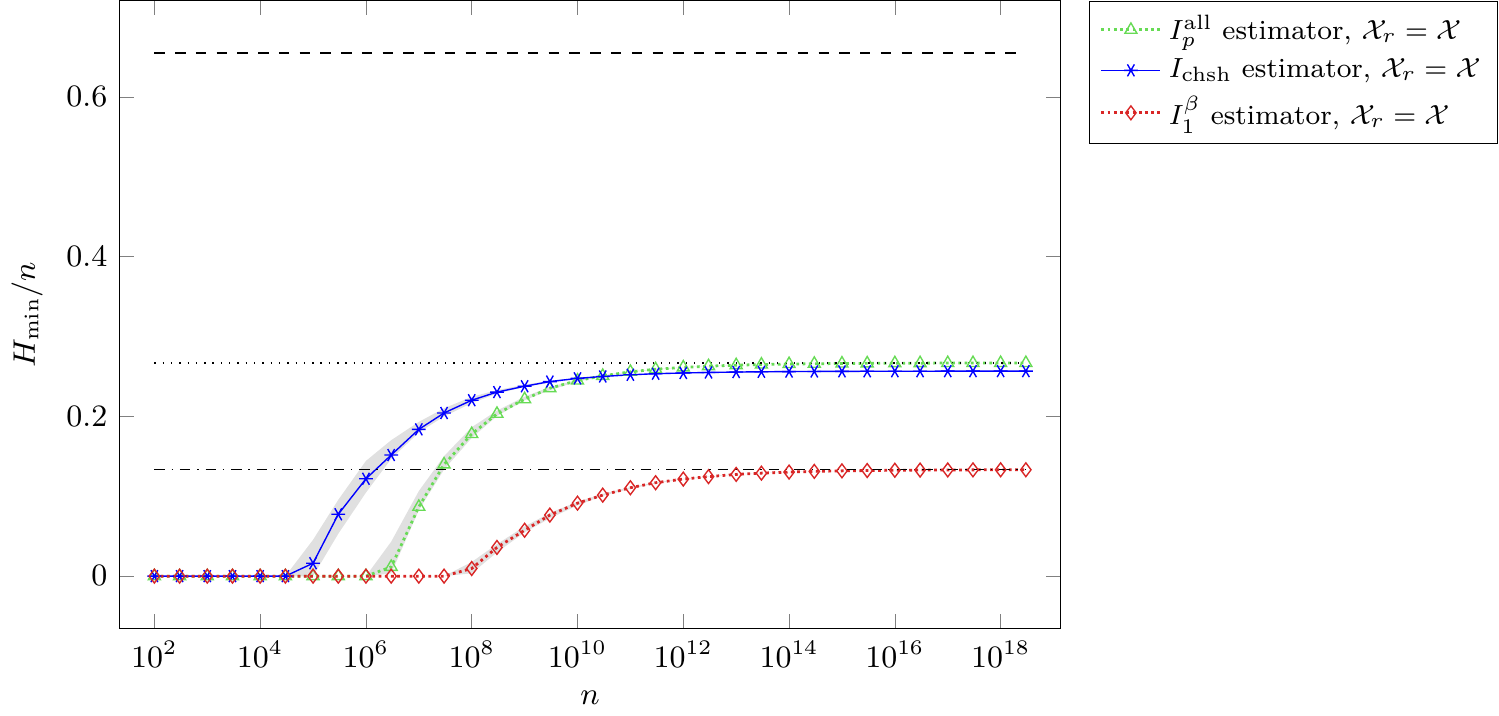}
\caption{Bounds on min-entropy rate $H_{\min}/n$ (eq.~\eqref{eq:Gf-x}) with $\mathcal X_r = \mathcal X$, for a varying number of measurement runs $n$ and for different Bell expressions.
Note that for $\mathcal X_r = \mathcal X$, the error term $\nu(\vec x)\eta/n$ equals zero.
The protocol is simulated for the device behavior $p$ in eq.~\eqref{eq:behavior} as explained in Section~\ref{sec:pra_case}.
The dashed and dotted lines represent the min-entropy $H(p)$ associated to the behavior $p$, respectively for $\mathcal X_r = \{(1,0)\}$ and $\mathcal X_r = \mathcal X$.
The dash-dotted line represents the min-entropy $H(I_1^\beta[p])$ for $\mathcal X_r = \mathcal X$ given the expectation value of the tilted-CHSH expression $I_1^\beta$.
}
\label{fig:1}
\end{figure}

As we can see, the expression $I_1^\beta$ gives the worst results.
The reason for this is that the inequality is suited to the extremal behavior $p_\text{ext}$ rather than the imperfect behavior we simulated (i.e., in equation \eqref{eq:behavior}, the case of perfect visibility $v = 1$ rather than $v = 0.99$).
On the other hand, the expression $I_p^\text{all}$ is tailored to our illustrative  behavior \eqref{eq:behavior}; it thus gives asymptotically optimal results for $\mathcal X_r = \mathcal X$ according to \cite{NPS14,BSS14}.
There is, however, no reason for it to be optimal for finite, low values of $n$.
Indeed, we observe that the CHSH expression, while not specially suited to the behavior of our device, yields a better performance for values of $n$ lower than $10^{10}$.
The CHSH expression appears actually to be a good randomness certificate for all values of $n$, as it only performs slightly worse than $I_{p}^\text{all}$ asymptotically.

\subsection{Bounding randomness for a subset of all inputs (\texorpdfstring{$\mathcal{X}_r \subseteq \mathcal{X}$}{X\textrinferior{} \textsubseteq{} X})} \label{sec:lessinputs}
Having reviewed the case $t=1$ and $\mathcal{X}_r=\mathcal{X}$, we proceed to consider the modifications introduced in this work.
We consider first the possibility $\mathcal{X}_r\subset \mathcal{X}$.
This means that the RB function $H$ is only required to non-trivially bound the output probability for inputs that are in the set $\mathcal{X}_r$.
This is an important feature because for many Bell expressions the randomness that can be certified depends on the  input used.
For instance, maximal violation of the tilted-CHSH inequalities may imply that the randomness is maximal for one input pair but near zero for another input pair \cite{AMP12}.
Using a function which is simultaneously randomness-bounding for all inputs $x\in\mathcal{X}$ would then be highly sub-optimal in this case.
This aspect is particularly important for photonic implementations of DI protocols: recent photonic Bell tests rely on partially entangled states \cite{CMA+13,GMR+13,GVW+15,SMC+15}, for which the optimal extraction of randomness requires the use of a specific input.

According to our analysis, in the case $\mathcal{X}_r\subseteq \mathcal{X}$, the bound \eqref{eq:Gf-} becomes
\begin{align}\label{eq:Gf-x}
H_\text{min} \gtrsim nH\mleft( \hat{f} - \gamma \sqrt{\frac{2}{n}\ln \frac{1}{\epsilon}} \mright) - \nu(\vec{x}) \eta \,,
\end{align}
where $H$ is now a RB function for $\mathcal{X}_r$, which will generally yield an improvement over a RB function that is required to be valid for all of $\mathcal{X}$.
Our analysis, however, introduces a penalty term of the form $\nu(\vec{x})\eta$, where $\eta \le \log_2\abs{\mathcal{A}}$ is bounded by a constant and $\nu(\vec{x})$ is the number of inputs not in $\mathcal{X}_r$ that have been observed.%
\footnote{Note that in certain cases, the RB function may be insensitive to the choice of $\mathcal{X}_r$.
For instance, in the case of the CHSH inequality, which is highly symmetric, putting all inputs on the same footing, any RB function returns the same bound on the randomness independently of the choice of $\mathcal{X}_r$, and in particular when $\mathcal{X}_r=\mathcal{X}$.
Thus the choice of $\mathcal{X}_r$ will not impact the main term of \eqref{eq:Gf-x} but only the penalty term, which will vanish if $\abs{\mathcal X_{\bar r}}=0$.
In such situations, it is therefore preferable to take $\mathcal{X}_r=\mathcal{X}$, as in \cite{PAM+10,PM13}.}
To keep this penalty term as low as possible, we should choose inputs in $\mathcal{X}_{\bar r}=\mathcal{X}\setminus\mathcal{X}_r$ with a low probability.
One possibility, compatible with our previous discussion about the introduction of a bias in the input distribution, is to take $\pi(x)=\kappa' n^{-\delta}$ for $x\in\mathcal{X}_{\bar r}$, in which case the expected value of $\nu(\vec{x})$ would be $\abs{\mathcal{X}_{\bar r}}\kappa'n^{1-\delta}$.
This is negligible asymptotically with respect to the main term of \eqref{eq:Gf-x}, which is $\Omega(n)$, provided that $\delta>0$.
The input distribution \eqref{eq:input-dist} chosen for our numerical example satisfies this requirement.

The corresponding min-entropy rate bound (that is, \eqref{eq:Gf-x} divided by $n$) for $\epsilon=10^{-6}$ and $\eta = \log_2\abs{\mathcal A} = 2$ is represented in Figure~\ref{fig:2} as a function of the number of runs $n$, for different Bell expressions and for two choices of $\mathcal X_r$.

\begin{figure}[t]
\centering
\includegraphics{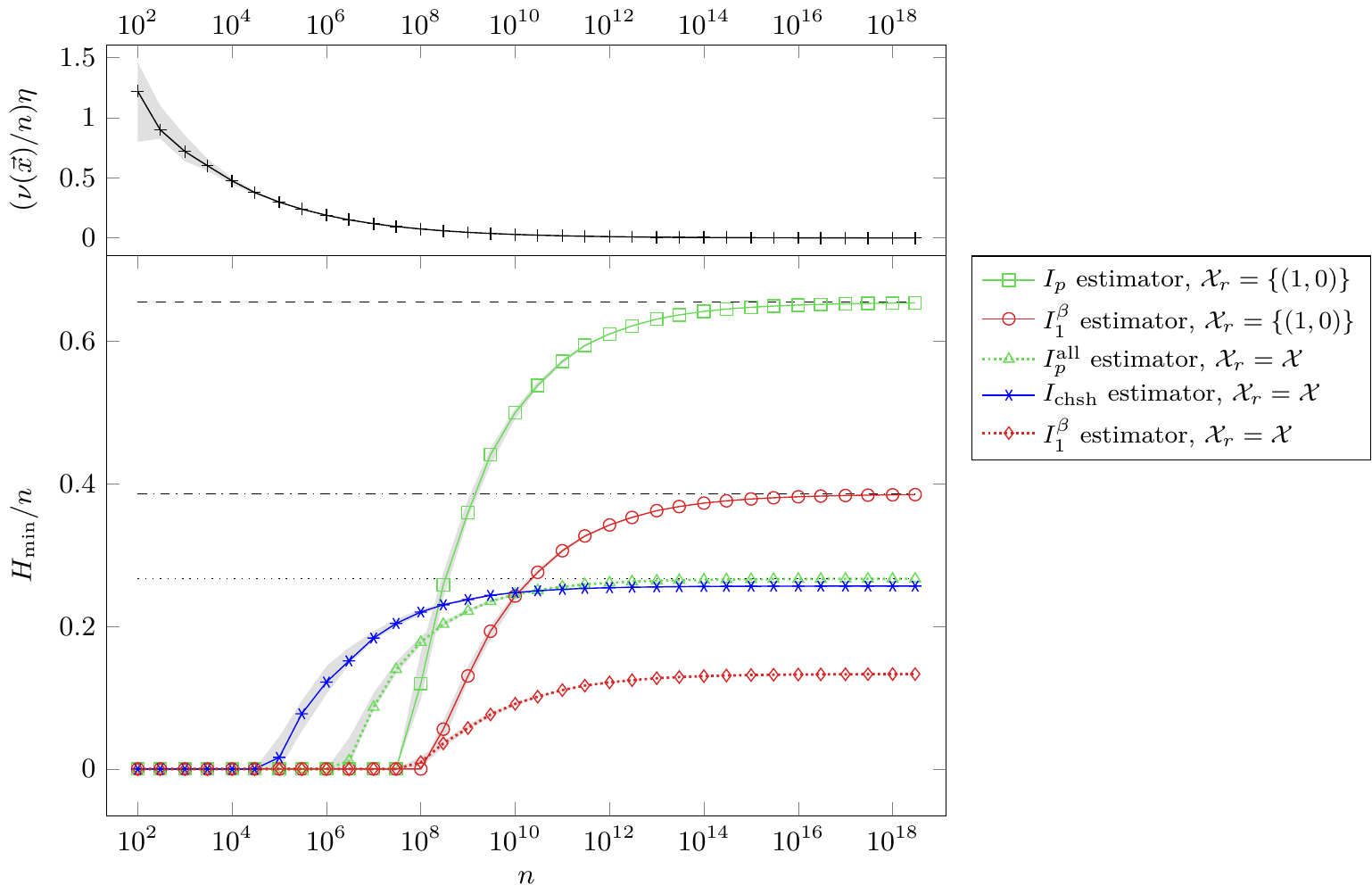}
\caption{%
(Top)
Penalty term in the min-entropy rate bound for different number of measurement rounds $n$.
(Bottom)
Bounds on min-entropy rate $H_{\min}/n$ (eq.~\eqref{eq:Gf-x}) for a varying number of measurement runs $n$, for different Bell expressions and for different input subsets $\mathcal X_r$.
The protocol is simulated for the device behavior $p$ in eq.~\eqref{eq:behavior} as explained in Section~\ref{sec:pra_case}.
The dashed and dotted lines represent the min-entropy $H(p)$ associated to the behavior $p$, respectively for $\mathcal X_r = \{(1,0)\}$ and $\mathcal X_r = \mathcal X$.
The dash-dotted line represents the min-entropy $H(I_1^\beta[p])$ for $\mathcal X_r = \{(1,0)\}$ given the expectation value of the tilted-CHSH expression $I_1^\beta$.}
\label{fig:2}
\end{figure}

Figure~\ref{fig:2} shows that in spite of the penalty term, which quickly vanishes as $n$ grows, the bound on the entropy rate for $\mathcal X_r = \{(1,0)\}$ for the expressions $I_1^\beta$ and $I_p$ (the analogue of $I_p^\text{all}$ for this restricted $\mathcal X_r$) is significantly better than the values obtained with $\mathcal X_r = \mathcal X$.
This clearly shows the value of using the restricted randomness-generating input set $\mathcal X_r=\{(1,0)\}$.
In particular, by combining this with the use of the optimal expression $I_p$ corresponding to behavior \eqref{eq:behavior}, one can asymptotically reach the theoretical value $H(p)$ of the min-entropy (represented by the dashed line in Figure~\ref{fig:2}).
Furthermore, whereas using the $I_1^\beta$ expression with $\mathcal X_r = \mathcal X$ yields worse values than using the CHSH expression, taking $\mathcal X_r = \{(1,0)\}$ yields an asymptotic entropy rate for the $I_1^\beta$ expression which is higher than using the CHSH expression.
As mentioned at the beginning of this subsection, this is because $I_1^\beta$ is not adapted to bound randomness independently of the input (i.e., $\mathcal X_r = \mathcal X$) \citep{AMP12}.

Similarly to the difference between the curves corresponding to $I_p^\text{all}$ and $I_1^\beta$ for $\mathcal X_r = \mathcal X$, the difference between the asymptotic entropy rates reached by using the expressions $I_p$ and $I_1^\beta$ for $\mathcal X_r = \{(1,0)\}$ is caused by the imperfect visibility parameter $v = 0.99$ in the simulated behavior \eqref{eq:behavior}.

Making the right choice of Bell expression and input subset $\mathcal X_r$ depends not only on the device, but also on the value of $n$.
Indeed, while $I_p$ is an optimal expression for certifying randomness in this specific device with respect to the input subset $\mathcal X_r = \{(1,0)\}$, this is only the case asymptotically.
For small $n$, Figure~\ref{fig:2} suggests that the CHSH expression has a better resistance to statistical fluctuations than the other expressions we considered, regardless of $\mathcal X_r$.

Note that we did not attempt to optimize the choice of input distribution and it is possible that a different choice of $\pi(x)$ would lead to better bounds in Figure~\ref{fig:2} for the two curves with $\mathcal X_r = \{(1,0)\}$.

\subsection{Bounding randomness from several Bell expressions (\texorpdfstring{$t \ge 1$}{t \textge{} 1})} \label{sec:multibell}
As we have seen, the right choice of a single Bell expression in the analysis of randomness is not straightforward, except for large values of $n$ where $I_p$ becomes optimal.
In this regime, it would seem perfectly admissible to perform tests on the device \emph{before} running the actual randomness generation protocol, in order to estimate $p$ and use this information to find an ``optimal'' Bell expression $I_{p'}$ as described above, which can afterwards be used in the randomness generation protocol proper.
However, there are disadvantages to this method.

Firstly, to find an expression $I_{p'}$ that performs comparably to the optimal $I_p$ for the device behavior $p$, we must know $p$ to a sufficiently high accuracy.
In a black box scenario where imperfections cannot be ruled out, this means that a significant number of measurements must be performed in order to evaluate the behavior to great precision.
Since the Bell expression needs to be fixed in advance of the protocol, those evaluation rounds cannot be taken from measurement rounds of the protocol and must instead be thrown away.
In addition, the behavior of the devices may vary in time, unlike our i.i.d.\ choice \eqref{eq:behavior}, due to drifts in the experimental set-up for example.
In that case, one would need to periodically estimate $p$ and rederive the corresponding optimal Bell expression on some subset of the measurement data that needs to be thrown away.
Finding an expression $I_{p'}$ also requires methods of inference of the behavior of the device from a finite sample: indeed, the estimated behavior \eqref{eq:freq} cannot be used directly to find a candidate $I_{\hat p}$, as $\hat p$ almost always violates the no-signaling conditions.
There exist different approaches to this inference (see for instance \cite{BSS14,MSA+17,SBSL16,LRZ+17,GMG+17}), so a nontrivial choice must be made.

Finally, even ignoring the problem of estimating the unknown behavior $p$, the associated data loss, or the drift of $p$ over time, we saw in the previous section and in Figure~\ref{fig:2} that the choice of a Bell expression is not straightforward when considering different values of $n$.
For example, the asymptotically optimal expression $I_p$ as formulated in \cite{NPS14,BSS14} is generally not the best for low values of $n$.
There is thus no general method to guide the choice of a Bell expression for a given $n$.

In order to avoid the above problems associated to the use of a single Bell expression,%
\footnote{%
Note that another issue is the choice of a randomness-generating input set $\mathcal X_r$.
The set $\mathcal X_r$ maximizing the randomness generation rate depends obviously on the underlying behavior $p$, but also on the number of rounds $n$, as illustrated in Fig.~\ref{fig:2}.
In practice, setting $\mathcal X_r$ could be the result of an informed choice based on prior information about the behavior of the devices or a rough estimate of it made before running the protocol.
It is reasonable to expect, as we do in our numerical simulation, that the optimal set $\mathcal X_r$ is only weakly sensitive to fluctuations or drifts of the experimental set-up, since it takes its value from a discrete set.
The use of a fixed set $\mathcal X_r$ is thus less problematic than the use of a fixed Bell estimator.
}
we now turn to the second element introduced in this work: the possibility to estimate the randomness from $t>1$ Bell expressions, and in particular from the full set of observed frequencies of occurrence $\hat{p} = \lbrace \hat{p}(a \given x) \rbrace$ as defined in eq.~\eqref{eq:freq}.

When we have more than one Bell expression, the bound \eqref{eq:Gf-x} generalizes to
\begin{align}\label{eq:GVx}
H_\text{min} \gtrsim nH\mleft( [\hat{\bff}^-,\hat{\bff}^+] \mright) - \nu(\vec{x})\eta\,,
\end{align}
where the one-dimensional interval $[\hat f^-,\infty]$ has simply been replaced with the multidimensional region $[\hat{\bff}^-,\hat{\bff}^+]$.
As before the limits of the region depend on the security parameter $\epsilon$, the constants $\gamma_\alpha$, and they become smaller with the number of runs $n$ (see equations \eqref{eq:muxvpm} and \eqref{eq:confreg}).

Increasing the number of Bell expressions can have both beneficial and detrimental consequences.
We can reach an understanding of this by considering the optimization problem \eqref{eq:guess5} that defines $H( [\hat{\bff}^-,\hat{\bff}^+] )=-\log_2 G( [\hat{\bff}^-,\hat{\bff}^+] )$.
This problem essentially evaluates the randomness of a certain quantum behavior $p$ such that $\hat{\bff}^- \le \bvec f[p] \le \hat{\bff}^+$.
Each vector component of this constraint defines two affine constraints, $\hat f_\alpha^- \le f_\alpha[p] \le \hat{f}_\alpha^+$, restricting the set of values $p$ can take in the optimization.
From a geometrical point of view, for each $\alpha = 1, \dotsc, t$, this defines two parallel hyperplanes in the space of behaviors, delimiting a region between them which we call a slab.
The full constraint $\hat{\bff}^- \le \bvec f[p] \le \hat{\bff}^+$ defines a polytope in the space of behaviors which is the intersection of the $t$ slabs.

The optimization \eqref{eq:guess5} identifies the worst-case bound on randomness for quantum behaviors inside this constraint polytope.
We would therefore like to restrict this region as much as possible given a value of the confidence parameter $\epsilon$.
We thus see that adding Bell expressions is generally beneficial, as it cuts the constraint polytope into a smaller volume.
However, as stated in Lemma~\ref{thm:azuma}, the confidence parameter $\epsilon$ in the protocol is shared between all $2t$ parameters $\epsilon^\pm_\alpha$, as $\sum_{\alpha=1}^t (\epsilon_\alpha^+ + \epsilon_\alpha^-) = \epsilon$.
A consequence of this is that the more Bell expressions we have, the smaller $\epsilon_\alpha^\pm$ are on average.
Since smaller values of $\epsilon^\pm_\alpha$ give thicker slabs (see equation~\eqref{eq:muxvpm}), if we are distributing $\epsilon$ evenly across all $\epsilon^\pm_\alpha$ for instance, this amounts to a dilation of the constraint polytope in optimization \eqref{eq:guess5}.
Nevertheless, since the width of the slab depends on $\epsilon^\pm_\alpha$ only through a factor $\sqrt{\ln(1/\epsilon^\pm_\alpha)}$, we will typically find that this negative effect is outweighed by the positives of adding more Bell expressions, and the randomness bound is globally improved.

\begin{figure}[tp]
\centering
\includegraphics{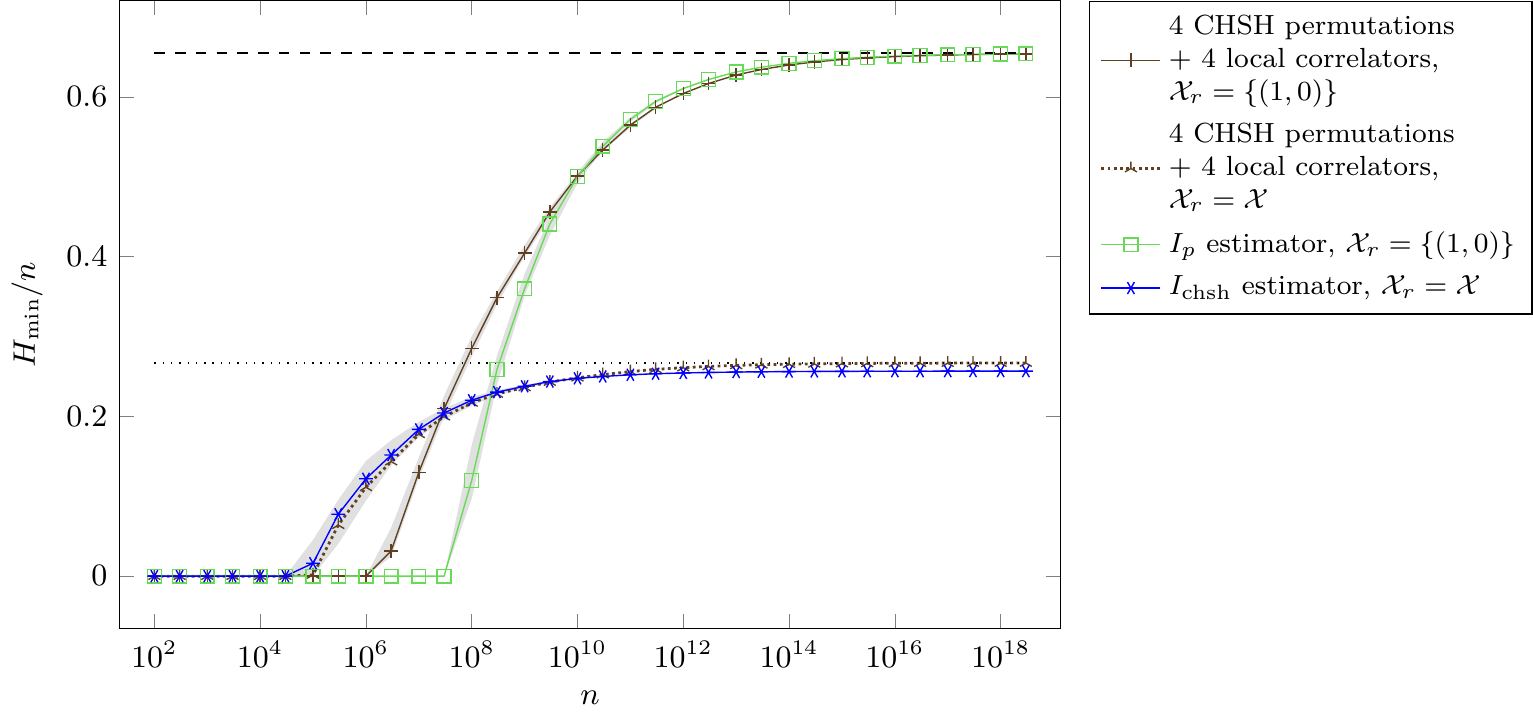}
\caption{Bounds on the min-entropy rate $H_{\min}/n$ (eq.~\eqref{eq:GVx}) for a single Bell expression ($I_\rchsh$ or $I_p$), and for a ``complete'' set of $8$ expressions.
The protocol is simulated for the device behavior $p$ in eq.~\eqref{eq:behavior} as explained in Section~\ref{sec:pra_case}.
The dashed and dotted lines represent the min-entropy $H(p)$ associated to the behavior $p$, respectively for $\mathcal X_r = \{(1,0)\}$ and $\mathcal X_r = \mathcal X$.
}
\label{fig:4}
\end{figure}

This improvement is illustrated in Figure~\ref{fig:4}, where we reconsider the numerical example presented in Section~\ref{sec:pra_case}.
We now use $8$ Bell expressions that are equivalent (for quantum behaviors) to the specification of the $16$ probabilities $p(\mathrm{a}_1\mathrm{a}_2 \given \mathrm{x}_1\mathrm{x}_2)$ (we will explain why we use this choice of $8$ Bell expressions later).

In the case $\mathcal X_r = \{(1,0)\}$, we see that the randomness that we can extract with multiple Bell expressions is similar to the use of the optimal expression $I_p$ alone for about $n\geq 10^{10}$ runs, but much better for smaller numbers of runs.
In fact, the improvement is even better in practice, even in regions where the use of $8$ Bell expressions gives the same rate as $I_p$, because in plotting the rate for $I_p$ we knew the \emph{exact} behavior $p$ from which the measurement runs are sampled.
In a real experiment (in particular with drifts over time),
we would instead need to infer the behavior $p$ at regular intervals from auxiliary measurements and throw away the corresponding data.
In contrast, the method based on the full set of observed frequencies achieves the same or better randomness extraction without throwing away any data.

In the case $\mathcal X_r = \mathcal X$, the use of $8$ expressions is comparable to that of CHSH
and, as with $I_p$ above, it outperforms the $I_p^\text{all}$ expression (not shown in Figure~\ref{fig:4}; see Figure~\ref{fig:1} or \ref{fig:2}).
For small values of $n$, the CHSH expression keeps a small advantage.
This can be understood from the fact that the CHSH expression itself is part of the set of $8$ expressions that we used in Figure~\ref{fig:4}.
The difference between the two curves therefore results from a trade-off between a better estimation of randomnes from more expressions and the negative effect of wider margins $\epsilon^\pm_\alpha$ in the confidence region.

In the remainder of this section, we discuss in more detail how to choose a good set of Bell expressions for the protocol.
For this, let us start by considering our protocol when $n\rightarrow\infty$.
In this asymptotic limit, the interval $[\hat{\bff}^-,\hat{\bff}^+]$ narrows down towards the point $\hat{\bff}=\bff[\hat{p}(a \given x)]$, which is just the value of the $t$ Bell expressions $\bff$ computed on the experimentally observed frequencies $\hat{p}(a \given x)$.
If the bias towards inputs in $\mathcal{X}_r$ is appropriately chosen (as discussed previously), then the relative contribution of the penalty term vanishes as $n \rightarrow \infty$ and the bound \eqref{eq:GVx} becomes in the asymptotic limit, up to sublinear terms,
\begin{align}
H_\text{min} \gtrsim nH( \hat{\bff} ) \,.
\end{align}
Furthermore, in the case where the device behaves in an i.i.d.\ way according to a behavior $p$, then, asymptotically, $\hat{\bff}\rightarrow \bff[p]$.
If one chooses enough Bell expressions as to fully characterize the behavior of the devices (for instance, by using an estimator for each probability $p(a \given x)$),  $\hat{\bff}$ thus becomes equivalent to the knowledge of $p$ and the above bound converges to the maximal min-entropy bound one can obtain from $p$ given $\mathcal X_r$, as characterized in \cite{NPS14,BSS14}.
In this sense, and as seen in Figure~\ref{fig:4}, our protocol is asymptotically optimal.

Note that there are different sets of Bell estimators that are asymptotically equivalent to the knowledge of the full set of probabilities $p(a \given x)$.
For instance in a bipartite Bell experiment with two inputs and two outputs there are $16$ probabilities $p(a \given x)=p(\mathrm{a}_1\mathrm{a_2} \given \mathrm{x}_1\mathrm{x}_2)$ with $\mathrm{a}_1,\mathrm{a}_2,\mathrm{x}_1,\mathrm{x}_2\in\{0,1\}$ and thus $16$ associated Bell expressions $e_1,\dotsc,e_{16}$ defined by $e_\alpha[p]=p(\mathrm{a}_1\mathrm{a_2} \given \mathrm{x}_1\mathrm{x}_2)$, with one value of $\alpha$ for each of the possible values of $(\mathrm{a}_1,\mathrm{a}_2,\mathrm{x}_1,\mathrm{x}_2)$.
But since the probabilities $p(a \given x)$ satisfy normalization and no-signaling, they are uniquely specified by the $8$ correlators of eq.~\eqref{eq:correlators},
which constitute $8$ Bell expressions $g_1,\dotsc,g_8$, where $g_1$ and $g_2$ are the first party's two marginal correlators $\mean{A_{\mathrm x_1}}$, $g_3$ and $g_4$ are the second party's $\mean{B_{\mathrm x_2}}$, and $g_4, \dotsc, g_8$ are the four bipartite correlators $\mean{A_{\mathrm x_1} B_{\mathrm x_2}}$.

Alternatively, the probabilities are also equivalent to the $8$ expressions $h_1,\dotsc,h_8$ with $h_\alpha = g_\alpha$ for $\alpha = 1,\dotsc,4$, and $h_\alpha$ for $\alpha = 5,\dotsc,8$ are four linearly independent permutations of the CHSH expression, generalizing \eqref{eq:chsh}:
\begin{equation}
\label{eq:genchsh}
I_\rchsh^{\mathrm{y}_1,\mathrm{y}_2}
 = \sum_{\mathrm x_1, \mathrm x_2 \in \{0,1\}}
  (-1)^{(\mathrm{x}_1+\mathrm{y}_1)(\mathrm{x}_2+\mathrm{y}_2)}
  \mean{A_{\mathrm{x}_1} B_{\mathrm{x}_2}}
 \qquad \mathrm{y}_1,\mathrm{y}_2 \in\{0,1\}.
\end{equation}

As we increase the number of rounds, all these possible choices become equivalent, since the intervals $[\hat{\mathbf{e}}^-,\hat{\mathbf{e}}^+]$, $[\hat{\mathbf{g}}^-,\hat{\mathbf{g}}^+]$, $[\hat{\mathbf{h}}^-,\hat{\mathbf{h}}^+]$ define constraint polytopes in the space of behaviors $p$ that asymptotically intersect the quantum set at the same unique point.
However, the choice of one set of estimators over another could make a difference for finite $n$.

Generally speaking, when choosing which Bell expressions to use for a fixed number $t$, we may prefer that as many of them as possible be linearly independent.
Consider $t-1$ Bell expressions and their associated slabs, which define a constraint polytope.
In the absence of any meaningful information concerning the behavior of the objective function of \eqref{eq:guess5} within its feasible set, the choice of a $t$-th Bell expression should be dictated by the resulting reduction of the constraint polytope: cutting a large volume out is more likely to reduce the maximum of \eqref{eq:guess5}.
As $n$ grows large and the slabs grow thinner, the best way to reduce this volume is to choose a Bell expression that is linearly independent from the $t-1$ previous ones, if possible.
We can easily understand this in the asymptotic limit: as we mentioned above, the optimization converges to $H(\hat \bff)$, and with enough linearly independent Bell expressions, $\bff[\hat p]$ uniquely defines $\hat p$, hence $H(\hat \bff) = H(\hat p)$.
At this point, adding more expressions only makes $\bff[\hat p]$ a more redundant definition of $\hat p$, which does not improve the randomness bound.
On the other hand, with too few independent Bell expressions, $\hat \bff = \bff[\hat p]$ is compatible with many values of $\hat p$, and the worst value is what ends up determining $H(\hat \bff)$.

In addition, we see that there is no need for Bell expressions that are purely signaling, i.e., that have a constant value for all no-signaling behaviors $p$.
Indeed, since the feasible region of \eqref{eq:guess5} is defined by the intersection of the slabs and the quantum set, constraints deriving from purely signaling expressions are trivial in this region, and therefore do not contribute to improve the randomness bound.

Combining these two conclusions also indicates that we should avoid Bell expressions that are only linearly dependent up to purely signaling terms.
This implies for instance that the sets $g_1,\dotsc,g_8$ or $h_1,\dotsc,h_8$ should be preferred over $e_1,\dotsc,e_{16}$.
This is indeed what we find, as illustrated in Figure~\ref{fig:5}.

Note that the sets $g_1,\dotsc,g_8$ and $h_1,\dotsc,h_8$ only differ by a linear transformation, but the second set yields better results for the same (finite) number of rounds $n$.
With respect to optimization \eqref{eq:guess5}, this means that the feasible set for $h_1,\dotsc,h_8$ excluded the optimum obtained for $g_1,\dotsc,g_8$.
This might be related to the fact that in this scenario of two parties with two inputs and two outputs, the four versions of the CHSH inequalities constitute the facets that separate local from nonlocal behaviors, and they might therefore serve as better measures of nonlocality and randomness than the correlators $\mean{A_\mathrm{x_1}B_\mathrm{x_2}}$.

\begin{figure}[tp]
\centering
\includegraphics{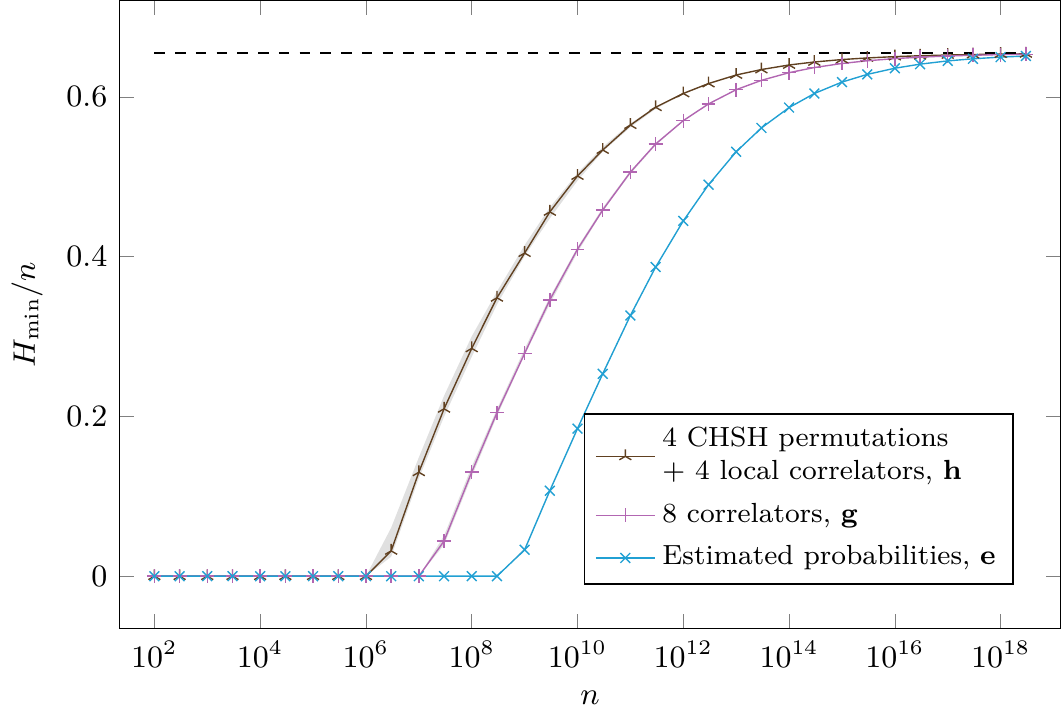}
\caption{%
Bounds on the min-entropy rate $H_{\min}/n$ (eq.~\eqref{eq:GVx}) with $\mathcal X_r = \{(1,0)\}$ for three (asymptotically) equivalent sets of Bell expressions.
The protocol is simulated for the device behavior $p$ in eq.~\eqref{eq:behavior} as explained in Section~\ref{sec:pra_case}.
The dashed line represents the min-entropy $H(p)$ associated to the behavior $p$ for $\mathcal X_r = \{(1,0)\}$.
}
\label{fig:5}
\end{figure}

This phenomenon can be visualized in a simpler instance where two Bell expressions are used, namely, $I_\rchsh^{0,0}$ and $I_\rchsh^{0,1}$.
In Figure~\ref{fig:heatplot}, we represent the RB function $G(I_\rchsh^{0,0}[p], I_\rchsh^{0,1}[p])$ with respect to a single input $\mathcal X_r = \{(0,0)\}$, as defined in eq.~\eqref{eq:guess1b}.
The figure shows the evolution of the randomness bound, from a trivial value of $1$ for values of $I_\rchsh^{0,0}[p]$ and $I_\rchsh^{0,1}[p]$ compatible with a local hidden variable model, to nontrivial values up to approximately $0.32$ reached at extremal points.
The variation of the RB function along these two axes is not trivial, but as we can see, the gradient mostly points along the directions of the CHSH axes, with the exception of the regions where the local and quantum boundaries meet.
To minimize this variation within a confidence region of fixed square shape in this plane, it is best to rotate the interval so that its sides are aligned with the gradient.
Aligning the confidence region with the two CHSH axes is therefore a sensible choice in this case.

\begin{figure}[t]
\centering
\includegraphics{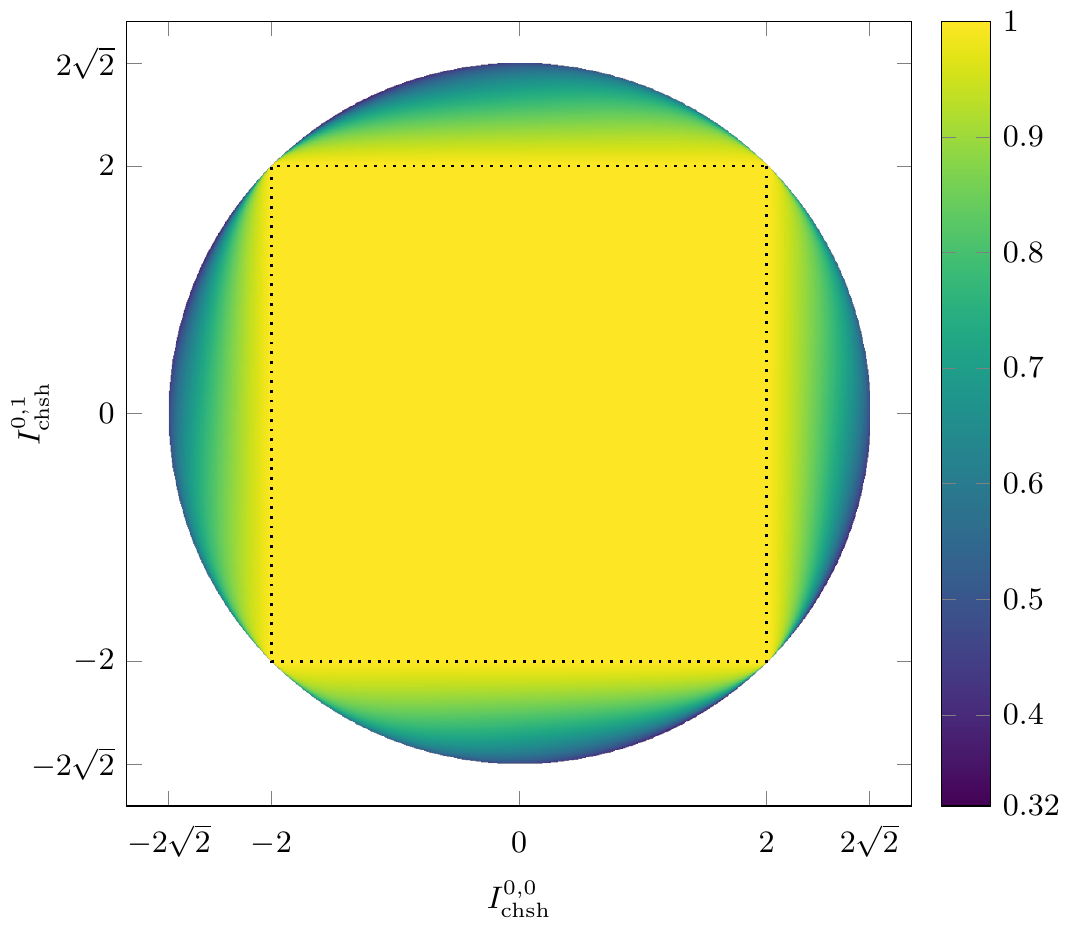}
\caption{%
RB function $G(f_1, f_2)$ according to \eqref{eq:guess1b} for $\mathcal X_r = \{(0,0)\}$, with $f_1 = I_\rchsh^{0,0}$ and $f_2 = I_\rchsh^{0,1}$, as defined in \eqref{eq:genchsh}.
The set of local behaviors projected onto this plane defines a square region, represented with a dotted line.
The quantum set $\mathcal Q$ projects into the circular region $(I_\rchsh^{0,0}[p])^2 + (I_\rchsh^{0,1}[p])^2 \le 8$ \cite{Mas03}.
Note that while this RB function is centrally symmetric, it is not symmetric under reflection through either CHSH axis.
The function was computed by solving the optimization program \eqref{eq:guess1b} at level $2$ of the NPA hierarchy.
}
\label{fig:heatplot}
\end{figure}

\section{Conclusion and open questions}
\label{sec:conclusion}
In recent years several protocols for generating randomness in a DI way have been introduced \cite{PAM+10,VV12,RUV13,CY13,ARV16,MS14,PM13,FGS13,MS14a,Col09,*CK11,MY98,ABG+07,MRC+14,Mas09,MPA11,PMLA13}, with varying degrees of security, rate of randomness expansion, or noise robustness.
They all, however, share the feature that they rely on the estimation of a single Bell expression.

As was shown in \cite{NPS14,BSS14}, in an idealized setting in which the behavior of the devices is known and fixed, more randomness can in principle be certified if one takes into account the violation of several Bell inequalities, or, even better, the full set of probabilities characterizing the devices' behavior.

We have shown here that a similar reasoning applies in the context of an actual DIRNG protocol, where randomness is directly certified from experimental data.
Specifically, we have combined the analysis of \cite{NPS14,BSS14} with the protocol introduced in \cite{PAM+10,PM13,FGS13}, which generates certified randomness against an adversary with classical side information.
We have in this way obtained a family of DIRNG protocols which rely on the estimation of a choice of $t\geq 1$ Bell expressions.
This includes the special case where the randomness is directly certified from the knowledge of the relative frequencies of occurrence of the outputs given the inputs.
Asymptotically, for a given $\mathcal X_r$, this results in an optimal generation of randomness from experimental data (as measured by the min-entropy) without having to assume beforehand that the devices violate a specific Bell inequality and without the need to infer the device behavior from preliminary measurements.
Furthermore, in the non-asymptotic case, the choice of an optimal Bell expression is ambiguous even if the device behavior is perfectly characterized.
Our method proposes a way of bypassing this problem by directly evaluating the randomness from the observed output frequencies.

Our protocol also provides a way of treating the case where the randomness of the outcomes of the devices is much higher for some inputs than for others.
This happens in particular when generating randomness from partially entangled states \cite{AMP12}, which are used in present photonic loophole-free Bell experiments \cite{GVW+15,SMC+15}.
Our analysis essentially amounts to consider that all the randomness has been generated from the optimal set of inputs, but corrected by a penalty term that is proportional to the number of events corresponding to non-optimal inputs.
By biasing the choices of inputs towards the optimal ones, one can make this penalty term negligible asymptotically.
However, for small numbers of measurement runs, we have seen that this procedure may be less efficient than an analysis based on a Bell expression that treats all inputs on the same footing, like the CHSH expression. It is possible that our way of treating non-optimal inputs could be improved, leading to more efficient protocols for small numbers of measurement runs.

Our result could be generalized in several ways.
First, how to prove security against quantum side information when several Bell expressions or the full set of data generated in the experiment are taken into account?
This is not a priori easy to answer since the analysis of most DIRNG protocols secure against quantum side information  rely on Bell expressions with a particular structure \cite{VV12,CY13,MS14} or, when they allow for arbitrary Bell expressions, do not optimally take into account the observed level of violation \cite{MS14a}.
Second, we based the statistical analysis on the Azuma-Hoeffding inequality, but alternative deviation theorems \cite{EW15} could be adapted to our setting.
Our attempt at improving our bounds using a tighter concentration inequality from Hoeffding \cite{Hoe63} (called McDiarmid's inequality in \cite{EW15} after \cite{McD89}) produced no visible difference in the plots.
Another alternative, Bentkus' inequality \cite{Ben04,EW15}, involves discrete summations with around $n$ terms, which would grow too large in our simulations to be used as-is.
Finally, we note that since DIRNG is not the only task where information from several Bell estimators can be exploited, a similar approach could be developed for other DI problems, such as DI quantum key distribution.

\section*{Acknowledgments}
We would like to thank the anonymous referees for the time dedicated to reviewing our manuscript and for their useful comments.

We acknowledge financial support from the Fondation Wiener-Anspach, the Interuniversity Attraction Poles
program of the Belgian Science Policy Office under the grant IAP P7-35 photonics@be, and the Fonds de la Recherche Scientifique F.R.S.-FNRS (Belgium).
O. N.S. acknowledges financial support through a grant of the Fonds pour la Formation \`{a} la Recherche dans l'Industrie et l'Agriculture (F.R.I.A.).
C. B. acknowledges funding from the F.R.S.-FNRS through a Research Fellowship.
J. S. was a postdoctoral researcher of the F.R.S.-FNRS at the time this research was carried out.
S. P. is a Research Associate of the F.R.S.-FNRS.
All plots were produced using TikZ \cite{tikz} and \textsc{pgfplots} \cite{pgfplots}.
The heat map of Figure~\ref{fig:heatplot} was produced using Matplotlib \cite{matplotlib}.

\appendix
\section{Appendix}
The problems introduced in Section~\ref{sec:sdp} are typical instances of conic programming.
A conic program is an optimization problem of the form
\begin{equation}
\begin{split}\label{eq:coneprog}
\maximize_{\{x_i\}} &\quad\sum_{i=1}^k \iprod{c_i,x_i}\\
\text{subject to} &\quad\sum_{i=1}^k A_i x_i = b\\
& \quad x_i\in K_i \qquad i=1,\dotsc,k \,,
\end{split}\end{equation}
where $b\in\mathbb{R}^m$, $x_i,c_i\in\mathbb{R}^{n_i}$, $A_i\in\mathbb{R}^{m\times n_i}$, and $K_i$ are closed convex cones ($i=1,\dotsc,k$).
Here $\iprod{c,x}$ denotes the usual scalar product $\iprod{c,x}=\sum_{j=1}^n c_jx_j$ between vectors in $\mathbb{R}^n$ and a set $K$ is a convex cone if $q_1 p_1 +q_2p_2$ belongs to $K$ for any nonnegative scalars $q_1,q_2\geq 0$ and any $p_1,p_2$ in $K$.

In other words, an optimization problem is a conic program if it involves the maximization of a linear function of the optimization variables given linear constraints on such variables and the condition that they belong to a certain family of cones.
Many familiar convex optimization problems, including linear and semidefinite programming, are instances of conic programming.

The problems \eqref{eq:guess4} and \eqref{eq:guess5}, and their NPA relaxations, are also conic programs.
Indeed, the set of unnormalized behaviors $\tilde{\mathcal{Q}}$ and all of its NPA relaxations are clearly cones and all other constraints in \eqref{eq:guess4} and \eqref{eq:guess5} are linear in the variables $\tilde p_{a,x}$.
As we said earlier, the NPA relaxations of these problems actually correspond to a specific type of conic programs, i.e., semidefinite programs.

Any conic program in the primal form \eqref{eq:coneprog} admits a dual formulation given by \cite{BV04}
\begin{equation}\begin{split}\label{eq:conedual}
\minimize_{y} &\quad \iprod{b,y}\\
\text{subject to} &\quad A_i^Ty-c_i \in K_i^* \qquad i=1,\dotsc,k
\end{split}\end{equation}
where $y\in\mathbb{R}^m$ and $K_i^*=\{z_i\in\mathbb{R}^{n_i}:\iprod{z_i,x_i} \geq 0, \,\forall x_i\in K_i\}$ is the dual cone of $K_i$.
Let $\alpha$ denote the value of the primal program \eqref{eq:coneprog} and $\beta$ the value of the dual program \eqref{eq:conedual}.
Then the strong duality theorem of conic programming says that if the primal program is feasible, has finite value $\alpha$, and has points $\tilde x_i\in\operatorname{int}(K_i)$ such that $\sum_{i=1}^k A_i \tilde x_i = b$, then the dual program is feasible and has finite value $\beta=\alpha$ \cite{BV04}.

Let us determine the dual of the optimization problem \eqref{eq:guess4}.
From now on, we take generically $\tilde{\mathcal{Q}}$ to denote the cone of unnormalized quantum behaviors or any of its NPA relaxations, as the analysis is identical in both cases.

Let us first rewrite \eqref{eq:guess4} in a form similar to \eqref{eq:coneprog}.
For this, define $c_{ax}\in\mathbb{R}^{\abs{\mathcal A}\times \abs{\mathcal X}}$ as the vector with components $c_{ax}(a',x')=\delta_{aa'}\delta_{xx'}$.
Thus $\iprod{c_{ax},p} = p(a \given x)$.
Let $b=(1,f_1[p],\dotsc,f_t[p]) \in \mathbb{R}^{1+t}$.
Let $u$ be any Bell expression such that $u[p] = \tr[p]$ for all no-signaling behaviors $p$, for instance $u(a,x) = \delta_{x,x_0}$ for some input $x_0 \in \mathcal X$.
Let $A_{ax}$ be matrices in $\mathbb{R}^{1+t,\abs{\mathcal A}\times \abs{\mathcal{X}}}$ with components $[A_{ax}]_{1,a'x'} = u(a',x')$ and $[A_{ax}]_{1+\alpha,a'x'}=f_\alpha(a',x')$ for $\alpha=1,\dotsc,t$.
Then
\begin{equation}\begin{split}\label{eq:guess_cone}
G(\bvec{f}[p])  = \max_{\{\tilde p_{a,x}\}_{a\in\mathcal{A},x\in\mathcal{X}_r}} &\quad \sum_{a\in\mathcal{A},x\in\mathcal{X}_r} \iprod{c_{ax},\tilde p_{a,x}} \\
\text{subject to} & \quad \sum_{a\in\mathcal{A},x\in\mathcal{X}_r} A_{ax} \tilde p_{a,x}= b\\
 & \quad \tilde p_{a,x}\in \tilde{\mathcal{Q}} \quad \forall a\in\mathcal{A},\forall x\in \mathcal{X}_r.
\end{split}\end{equation}
The dual is then readily given as
\begin{equation}\begin{split}\label{eq:guess_cone_dual}
G(\bvec{f}[p])=\min_{y\in\mathbb{R}^{1+t}} &\quad \iprod{b,y}\\
\text{subject to} &\quad A_{ax}^Ty-c_{ax} \in \tilde{\mathcal Q}^* \qquad \forall a\in\mathcal{A},\forall x\in \mathcal{X}_r\,,
\end{split}\end{equation}
where $\tilde{\mathcal{Q}}^*$ is the dual cone of $\tilde{\mathcal{Q}}$, that is, the set $\tilde{\mathcal{Q}}^*=\{c\in\mathbb{R}^{\abs{\mathcal A}\times\abs{\mathcal X}}: \iprod{c,\tilde p}\geq 0,\,\forall \tilde p\in \tilde{\mathcal{Q}}\}$.
Note that this dual cone can be identified with the set of Tsirelson inequalities for normalized behaviors, that is, the set $\mathcal{Q}^*=\{(d_0,d)\in\mathbb{R}^{1+\abs{\mathcal A}\times\abs{\mathcal X}}: \iprod{d,p}\leq d_0,\,\forall p\in {\mathcal{Q}}\}$.
Indeed, note that by normalization $\iprod{u,p}=1$ and thus an inequality $\iprod{d,p}\leq d_0$ valid for $p\in\mathcal{Q}$ can always be rewritten in the form $\iprod{(d_0 u-d),p} \geq 0$, hence in the form $\iprod{c,p}\geq 0$ for some suitable $c$.
But an inequality $\iprod{c,p}\geq 0$ is clearly valid for $\mathcal{Q}$ if and only if it is valid for $\tilde{\mathcal{Q}}$.

Using the explicit form of $b$, $A_{ax}$, and $c_{ax}$ and the above interpretation of $\tilde{\mathcal{Q}^*}$, we can rewrite the dual \eqref{eq:guess_cone_dual} as
\begin{equation}\begin{split}\label{eq:g_v(p)}
G(\bvec{f}[p])=\min_{y\in\mathbb{R}^{1+t}} &  \quad y_0+\sum_{\alpha=1}^t y_\alpha f_\alpha[p]  \\
\text{subject to}&\quad p'(a \given x)\leq y_0+\sum_{\alpha=1}^t y_\alpha f_\alpha[p']\quad \forall a\in\mathcal{A},\, \forall x\in\mathcal{X}_r,\,\forall p'\in\mathcal{Q}.
\end{split}\end{equation}
The interpretation of this dual problem is immediate.
We are seeking an affine function $y_0+\sum_a y_\alpha f_\alpha[p]$ satisfying the condition~1 in Definition~\ref{def:guess}---this is the constraint line of the optimization problem \eqref{eq:g_v(p)}.
Since the function is affine, it is also concave and the condition~2 in Definition~\ref{def:guess} is also trivially satisfied.
We are then searching for the best function of this form, i.e., the one that gives the lowest possible value (i.e., the most randomness)  for the given Bell expectations $\bvec{f}[p]$.
Note that considering an affine function is actually not a restriction because the optimal function $G(\bvec{f}[p])$ is concave, and therefore it is equivalent to the envelope of its tangents, which are affine functions of $\bvec{f}[p]$.

From the point of view of implementations, the condition that $p'(a \given x)\leq y_0+\sum_\alpha y_\alpha f_\alpha[p']$ is valid for $\mathcal{Q}$ can be cast as a search for a sum-of-squares decomposition of this inequality.
When $\mathcal{Q}$ is one of the NPA relaxations of the quantum set, this is equivalent to solving a certain SDP, which turns out to be, as expected, the dual of the original NPA problem.

Let us now determine the dual of the noisy problem \eqref{eq:guess5}, which we can rewrite as
\begin{equation}\begin{split}\label{eq:noisy_conic}
G(\freg)  = \max_{\substack{\{\tilde p_{a,x}\}_{a\in\mathcal{A},x\in\mathcal{X}_r}\\\boldsymbol{\tau},\boldsymbol{\kappa}\in\mathbb{R}^{t}}} &\quad \sum_{a\in\mathcal{A},x\in\mathcal{X}_r} \tilde p_{a,x}(a \given x)\\
\text{subject to} & \quad \sum_{a\in\mathcal{A},x\in\mathcal{X}_r}  \tr[\tilde p_{a,x}]  =1\\
 &\quad \sum_{a\in\mathcal{A},x\in\mathcal{X}_r}\, \bvec{f}[\tilde p_{a,x}]+\boldsymbol{\tau}= \hat{\bff}^+\\
  &\quad \sum_{a\in\mathcal{A},x\in\mathcal{X}_r}\, \bvec{f}[\tilde p_{a,x}]-\boldsymbol{\kappa}=\hat{\bff}^-\\
& \quad \tilde p_{a,x}\in \mathcal{\tilde Q} \quad \forall a\in\mathcal{A},\,\forall x\in \mathcal{X}_r\\
&\quad \boldsymbol{\tau},\boldsymbol{\kappa}\in \mathbb{R}^t_+ \,.
\end{split}\end{equation}
Again, let us rewrite in a form similar to \eqref{eq:coneprog}.
For this, define $b\in\mathbb{R}^{1+2t}=(1,\hat{f}^+_1,\dotsc,\hat{f}^+_t,\hat{f}^-_1,\dotsc,\hat{f}^-_t)$.
Let $A_{ax}$ be matrices in $\mathbb{R}^{1+2t,\abs{\mathcal A}\times \abs{\mathcal{X}}}$ with components $[A_{ax}]_{1,a'x'} = u(a',x')$ and $[A_{ax}]_{1+\alpha,a'x'}=f_\alpha(a',x')$ as above, and $[A_{ax}]_{1+t+\alpha,a'x'}=f_\alpha(a',x')$, for $\alpha=1,\dotsc,t$.
Additionally, we define two matrices $A_1,A_2$ in $\mathbb{R}^{1+2t,t}$ through $[A_1]_{1,\alpha}=0$, $[A_1]_{1+\alpha,\alpha'}=\delta_{\alpha,\alpha'}$, $[A_1]_{1+t+\alpha,\alpha'}=0$ and $[A_2]_{1,\alpha}=0$, $[A_2]_{1+\alpha,\alpha'}=0$, $[A_2]_{1+t+\alpha,\alpha'}=-\delta_{\alpha,\alpha'}$ for $\alpha, \alpha' \in \{ 1, \dotsc, t \}$.
Then
\begin{equation}\begin{split}\label{eq:noisy_conic2}
G(\freg)  = \max_{\substack{\{\tilde p_{a,x}\}_{a\in\mathcal{A},x\in\mathcal{X}_r}\\\boldsymbol{\tau},\boldsymbol{\kappa}\in\mathbb{R}^{t}}}   &\quad \sum_{a\in\mathcal{A},x\in\mathcal{X}_r} \iprod{c_{ax},\tilde p_{a,x}}\\
\text{subject to} & \quad \sum_{a\in\mathcal{A},x\in\mathcal{X}_r} A_{ax} \tilde p_{a,x}+A_1\boldsymbol{\tau}+A_2\boldsymbol{\kappa}= b\\
 & \quad \tilde p_{a,x}\in \mathcal{\tilde Q} \quad \forall a\in\mathcal{A},\,\forall x\in \mathcal{X}_r\\
&\quad \boldsymbol{\tau},\boldsymbol{\kappa}\in \mathbb{R}^t_+ \,.
\end{split}\end{equation}
The dual is then readily given as
\begin{equation}\begin{split}\label{eq:noisy_conic_dual}
G(\freg) = \min_{v\in\mathbb{R}^{1+2t}} &\quad \iprod{b,v}\\
\text{subject to} &\quad A_{ax}^Tv-c_{ax} \in \tilde{\mathcal{Q}}^* \qquad \forall a\in\mathcal{A},\,\forall x\in \mathcal{X}_r\\
&\quad A_{1}^Tv,A_2^Tv \in \mathbb{R}^t_+
\end{split}\end{equation}
or, writing $v=(v_0,v_1,\dotsc,v_t, v'_1,\dotsc,v'_t)$
\begin{equation}\begin{split}\label{eq:noisy_conic_dual2}
G(\freg) = \min_{v\in\mathbb{R}^{1+2t}}   & \quad v_0+\sum_{\alpha=1}^t v_\alpha \hat{f}^+_\alpha+\sum_{\alpha=1}^t v'_\alpha \hat{f}^-_\alpha  \\
\text{subject to}&\quad v_0 u+\sum_{\alpha=1}^t (v_\alpha+v'_\alpha) {f}_\alpha-c_{ax}\in \tilde{\mathcal{Q}}^* \quad\forall a\in\mathcal{A},\,\forall x\in \mathcal{X}_r \\
&\quad v_\alpha \geq 0, v'_\alpha \leq 0 \quad \forall \alpha=1,\dotsc,t \,.
\end{split}\end{equation}
Defining $(y_0, y_\alpha^+, y_\alpha^-) = (v_0, v_\alpha, -v'_\alpha)$ and $y_\alpha=v_\alpha+v'_\alpha = y_\alpha^+ - y_\alpha^-$, we can reformulate this as
\begin{equation}\begin{split}\label{eq:noisy_conic_dual3}
G(\freg) =  \min_{(y_0,y^+,y^-)\in\mathbb{R}^{1+2t}}  & \quad y_0+\sum_{\alpha=1}^t y^+_\alpha \hat{f}^+_\alpha -\sum_{\alpha=1}^t y^-_\alpha \hat{f}^-_\alpha \\
\text{subject to}&\quad p'(a \given x)\leq y_0+\sum_{\alpha=1}^t y_\alpha f_\alpha[p']\quad \forall a\in\mathcal{A},\, \forall x\in\mathcal{X}_r,\,\forall p'\in\mathcal{Q}\\
&\quad y_\alpha=y^+_\alpha-y^-_\alpha\,,\quad y^+_\alpha,y^-_\alpha \geq 0\quad \forall \alpha=1,\dotsc,t \,.
\end{split}\end{equation}
This dual formulation is analogous to \eqref{eq:g_v(p)} and can be understood in the same way.
The difference lies in the objective function.

Although \eqref{eq:noisy_conic_dual3} is the best way to express the optimization for a numerical implementation, we may further simplify its expression by noting that, for a fixed value of $y_0$ and $\{y_\alpha\}$, the objective function is minimized when $y_\alpha^+ = y_\alpha$ if $y_\alpha \ge 0$ and $y_\alpha^- = -y_\alpha$ if $y_\alpha \le 0$ (in short, $y_\alpha^\pm = (\abs{y_\alpha} \pm y_\alpha)/2$).
We can then write the objective function as
\begin{equation}
	y_0 + \sum_{\alpha : y_\alpha \ge 0} y_\alpha \hat f_\alpha^+ + \sum_{\alpha : y_\alpha < 0} y_\alpha \hat f_\alpha^- \,,
\end{equation}
and further as the following maximum:
\begin{equation}
	\max_{\bvec f \in \freg} y_0 + \sum_{\alpha} y_\alpha f_\alpha \,.
\end{equation}
Indeed, the maximum will clearly be attained at one of the extreme points of the region $\freg$, whose components are of the form $f^\pm_\alpha$.
If $y_\alpha\geq 0$, the maximum will be attained when $f_\alpha$ is equal to $\hat{f}^+_\alpha$, if $y_\alpha\leq 0$, when it is equal to $\hat{f}^-_\alpha$.
All in all, \eqref{eq:noisy_conic_dual3} can thus be rewritten as the nested optimization
\begin{equation} \begin{split} \label{eq:minmaxdual}
	\min_{(y_0,y) \in \mathbb R^{1+t} \vphantom{\freg}} \, \max_{\bvec f \in \freg} &\quad y_0 + \sum_{\alpha=1}^t y_\alpha f_\alpha \\
	\text{subject to} & \quad p'(a \given x) \le y_0 + \sum_{\alpha=1}^t y_\alpha f_\alpha[p'] \quad \forall a\in\mathcal{A},\, \forall x\in\mathcal{X}_r,\,\forall p'\in\mathcal{Q} \,.
\end{split} \end{equation}
In \eqref{eq:noisy_conic_dual3}, we are thus solving a problem completely analogous to \eqref{eq:g_v(p)} except that the objective function now yields a bound on $G(\bvec f)$ that holds on the entire region $\freg$.
Since we minimize the objective function, we are searching for the best possible such bound.

Now, since the extreme points of $\freg$ are not necessarily quantum (i.e., do not necessarily belong to $\bvec{f}(\mathcal{Q})$), one could expect this upper bound to be strictly higher than the optimal quantum bound in $\freg$, contrarily to the primal formulation \eqref{eq:guess5}.
But this is not the case, as follows from the duality property of conic programs.
As we mentioned, the strong duality theorem for conic programs \cite{BV04} ensures that the value of the dual \eqref{eq:minmaxdual} matches that of the primal \eqref{eq:guess5}, as long as the primal has a strictly feasible point.
That is, a set of subnormalized probability vectors $\{\tilde p_{a,x}\}$ should exist such that all the equality constraints of the optimization are satisfied, and which lie in the interior of their respective cones, i.e., $\tilde p_{a,x} \in \interior(\tilde{\mathcal Q})$ for all $a \in \mathcal A, x \in \mathcal X_r$.
We argue that when the primal is not infeasible, this is almost always the case.

Consider the set $\bvec f[\interior(\mathcal Q)]$, that is, the set of Bell value vectors that are compatible with a non-extremal quantum behavior.
If the intersection of this set with the confidence region $\freg$ is non-null, we have strict feasibility of \eqref{eq:guess5}.
Indeed, let $p \in \interior(\mathcal Q)$ be a behavior such that $\bvec f[p] \in \freg$.
Then, any decomposition of $p$ as a sum of points in $\interior(\tilde{\mathcal Q})$, for instance $\tilde p_{a,x} = p / (\abs{\mathcal A} \times \abs{\mathcal X_r})$, gives a strictly feasible point for \eqref{eq:guess5}.

On the other hand, if the closure $\overline{\bvec f[\mathcal Q]}$ has no intersection with $\freg$, then the primal \eqref{eq:guess5} is infeasible and the dual \eqref{eq:minmaxdual} diverges to $-\infty$, as a rather straightforward application of the hyperplane separation theorem shows.

The last possibility is that $\bvec f[\interior(\mathcal Q)]$ has a point of tangency with $\freg$ or, in terms of separating hyperplanes, that there exist $\{y_0, y_\alpha\}$ such that $\sum_\alpha y_\alpha f_\alpha > y_0$ for all $\bvec f \in \bvec f[\interior(\mathcal Q)]$, and $\sum_\alpha y_\alpha f_\alpha \le y_0$ for all $\bvec f \in \freg$.
In this case, there might only exist non-strictly feasible points for the primal, while the dual does not diverge.
Without strong duality, there is no guarantee that the two resulting values are the same.
However, this case is irrelevant to us, as the chances that the confidence region $\freg$ around our estimated frequencies is tangent to $\bvec f[\interior(\mathcal Q)]$ are essentially zero.

Thus, in practice, the primal is either infeasible or it is strictly feasible.
Hence, when the dual \eqref{eq:minmaxdual} converges, we can safely conclude strong duality, and the optimum of the dual is indeed not worsened by allowing the dual to select $\bvec f \notin \bvec f[\mathcal Q]$, since the same value as computed by the primal \eqref{eq:guess5} is achieved by a feasible point $\{\tilde p_{a,x}\}$ such that $G(\freg) = G(\bvec f[p])$ with $p = \sum_{a,x} \tilde p_{a,x} \in \mathcal Q$.

\end{document}